\begin{document}
\title{The Price of Indivisibility in Cake Cutting}
\titlenote{This work was done while the first author was an intern under the supervision of the second author at Conduent Labs, Bangalore, India.}

\author{Eshwar Ram Arunachaleswaran}
\affiliation{%
 \institution{Indian Institute of Science, Bangalore}
 \department{Computer Science and Automation}
}

\author{Ragavendran Gopalakrishnan}
\affiliation{%
  \institution{Cornell University}
  \department{School of Civil & Environmental Engineering}
  \city{Ithaca}
  \state{NY}
  \postcode{14853}
  \country{USA}
}

\begin{abstract}
We consider the problem of envy-free cake cutting, which is the distribution of a continuous heterogeneous resource among self interested players such that nobody prefers what somebody else receives to what they get. Existing work has focused on two distinct classes of solutions to this problem - allocations which give each player a continuous piece of cake and allocations which give each player arbitrarily many disjoint pieces of cake. Our aim is to investigate allocations between these two extremes by parameterizing the maximum number of disjoint pieces each player may receive. We characterize the Price of Indivisibility (POI) as the gain achieved in social welfare (utilitarian and egalitarian), by moving from allocations which give each player a continuous piece of cake to allocations that may give each player up to $k$ disjoint pieces of cake. Our results contain bounds for the Price of Indivisibility for utilitarian as well as egalitarian social welfare, and for envy-free cake cutting as well as cake cutting without any fairness constraints.
\end{abstract}

\maketitle

\section{Introduction}

Cake cutting is a way to abstract the problem of dividing a continuous heterogeneous resource among multiple agents or players in a fair manner. The underlying idea is that different parts of a cake such as chocolate icing, nuts or berries represent different values to the people who are dividing the cake analogous to how the valuation of the agents vary heterogeneously over a continuous resource such as land or server time slots. There are multiple definitions of ``fairness'' in the context of cake cutting. One such definition of a fair allocation is one that is proportional, where each of the $N$ players get at least $\frac{1}{N}$ of their value of the cake. A stronger notion of fairness (which implies proportionality when the entire cake is allocated) is that of an \textit{envy-free} allocation, which guarantees that no player prefer some other player's allocated portion of cake. Throughout this paper, we adopt our notion of fairness to be that of envy-freeness. Applications of cake cutting include land division, partitioning advertising time slots, and the fair allocation of computing resources. The MARA survey of \citet{Chevaleyre06issuesin} provides a detailed description of the application areas.

Allocations in cake cutting, and thus the algorithms used to obtain them, are partitioned into two categories:
\begin{enumerate}
    \item \textit{Contiguous Allocations}, where every player gets a single piece of cake, and
    \item \textit{Non-Contiguous Allocations}, where each player may get arbitrarily many disjoint pieces of cake.
\end{enumerate}
In real-life applications such as land division or the partitioning of server time slots, Contiguous Allocations are preferred to Non-Contiguous Allocations because the latter give no guarantee over the number or sizes of the disjoint chunks of cake. For example, it is impractical to give infinitely (or even exponentially) many infinitesimally small crumbs of land to an agent in a land division problem. On the other hand, the social welfare of Non-Contiguous Allocations can be significantly larger than that of Contiguous Allocations. In addition, from a computational point of view, Non-Contiguous allocations are preferable because there is no finite protocol for the problem of finding envy-free Non-Contiguous Allocations \cite{Complexity}. Furthermore, there is recent work \cite{Aziz} that gives a finite bounded-time protocol for finding an envy-free Non-Contiguous Allocation.\footnote{This work also bounds the number of cuts (and hence the total number of pieces of cake), but this bound is of an impractically high order ($n^{n^{n^{n^{n^{n}}}}}$) to satisfy the concerns about players getting meaninglessly small crumbs of cake.}

Within the cake cutting literature, there is work that focuses on optimality (with respect to utilitarian welfare) \cite{Cohler_aaai11}, bounded time complexity \cite{Kurokawa:2013:CCB:2891460.2891537}, and truthfulness \cite{RePEc:eee:gamebe:v:77:y:2013:i:1:p:284-297}. These results are limited to cake cutting instances with restricted valutation functions (piecewise uniform or piecewise linear). The literature can also be classified according to the admissible allocations, that is, Contiguous and Non-Contiguous Allocations.

The goal of this paper is to analyze allocations that lie between the two extremes of Contiguous and Non-Contiguous Allocations, by defining ``$k$-Contiguous Allocations'', as follows.

\begin{definition}
Given an instance $i$ of a cake cutting problem, $Allocations(k)(i)$ refers to the set of Non-Contiguous Allocations of $i$ such that each agent is allocated at most $k$ disjoint pieces of cake, and $Allocations_f(k)(i)$ refers to the subset of $Allocations(k)(i)$ that only include envy-free allocations.
\end{definition}

In particular, we wish to characterize how the optimal social welfare (both utilitarian and egalitarian) grows with $k$. It is natural to expect that relaxing the constraint of allocating a single piece to each player to allowing up to $k$ pieces per player should improve social welfare, regardless of any additional fairness constraints. Therefore, we first study the effect of this relaxation for the plain resource allocation problem (without any fairness constraints), before moving on to the problem of envy-free cake cutting. This additional step allows us to analyze the impact of imposing envy-freeness. In summary, we characterize the growth of both utilitarian and egalitarian social welfare, with and without the constraint of envy-freeness. We call this growth the ``price of indivisibility'', and define four variants of this measure, one for each of the above four cases. Throughout, let $I(n)$ denote the set of all cake cutting instances with $n$ agents.

\begin{definition}
The maximum increase in utilitarian social welfare when each agent can be allocated up to $k$ disjoint pieces of cake is defined as
\begin{equation}
POIU(n,k) = \max_{i \in I(n)}\dfrac{\max_{a\in Allocations(k)(i)}\mbox{Utilitarian Welfare }(a)}{\max_{a\in Allocations(1)(i)}\mbox{Utilitarian Welfare }(a)}
\end{equation}
\end{definition}

\begin{definition}
The maximum increase in utilitarian social welfare when each agent can be allocated up to $k$ disjoint pieces of cake subject to envy-freeness is defined as
\begin{equation}
POIU_f(n,k) = \max_{i \in I(n)}\dfrac{\max_{a\in Allocations_f(k)(i)}\mbox{Utilitarian Welfare }(a)}{\max_{a\in Allocations_f(1)(i)}\mbox{Utilitarian Welfare }(a)}
\end{equation}
\end{definition}

The measures $POIE(n,k)$ and $POIE_f(n,k)$ for the egalitarian social welfare can be analogously defined. These definitions are inspired by the concept of ``Price of Fairness'', defined as the ratio of the optimal social welfare without the constraint of fairness to the optimal social welfare with the constraint of fairness. It was first studied for Non-Contiguous Allocations by \citet{Caragiannis2012} considering utilitarian social welfare, and for three notions of fairness (envy-freeness, proportionality, and equitability). \citet{Aumann:2015:EFD:2810066.2781776} then studied the price of fairness for Contiguous Allocations, considering both utilitarian and egalitarian social welfare, and for the same three notions of fairness.

\subsection{Our Results}

Our primary results comprise either exact values or strong lower bounds for each of the four variants of POI. In particular, we show the following:


\[ POIU(2,k) = 2 - \dfrac{1}{k} \]

\[ POIE(n,k) \begin{cases}
      \ge k & k\leq n \\
      = n & k > n
   \end{cases} \]

\[ POIU_{f}(n,k) \begin{cases}
      \ge k & k\leq n \\
      = n & k > n
   \end{cases} \]

\[ POIE_{f}(n,k) \begin{cases}
      \ge k & k\leq n \\
      = n & k > n
   \end{cases} \]

The first result above is an exact computation of the gain in utilitarian social welfare when there are only two players. We observe that this quantity is \textit{concave} in $k$; therefore, it grows swiftly for small values of $k$, and slows down for larger values, inching towards $2$ (the maximum value) when $k\rightarrow\infty$. In contrast, the quantities $POIU_{f}$, $POIE$, and $POIE_{f}$ grow linearly with $k$ until $k$ approaches $n$, at which point they stop growing. In the process of proving bounds for $POIU_{f}$ and $POIE_{f}$, we have proved a secondary result regarding the existence of envy-free contiguous allocations with certain desirable properties for a limited set of cake cutting instances, obtained by suitably adapting the approach of \citet{eqbm}, where the authors exhibit a game whose pure Nash equilibria are envy-free allocations.

The following are conjectures which we believe to be true:
\begin{enumerate}
    \item $POIU_{f}(n,k) = POIE_{f} = POIE_{f} = O(k)\quad \forall k \in \{1,2,3...,n\}.$
    \item $POIE(n,k) < POIE_{f}(n,k)\quad \forall k \in \{1,2,3...,n\}.$
    \item $POIU(n,k)$ is a concave \textit{concave} in $k$ for all values of $n$.
\end{enumerate}

\subsection{Some Related Work}

Our work attempts to characterize the growth of social and egalitarian welfare while increasing the maximum number of pieces allocated to a player, by drawing upon the definitions of the ``price of fairness'' from \citet{Caragiannis2012} and \citet{Aumann:2015:EFD:2810066.2781776}. Our concept, the ``price of indivisibility'' is analogous to the price of fairness, wherein we replace the fairness constraint with the constraint on the maximum number of continuous pieces allocated to a player.

The work of \citet{Caragiannis_towardsmore} deals with the aforementioned problem of many small crumbs in Non-Contiguous Allocations by introducing a new class of valuation functions. These ``Piecewise Uniform with Minimum Length'' (PUML) valuations ensure that each disjoint piece of cake allocated to any agent is  always longer than a specified minimum length, thus implicitly bounding the number of pieces each player receives. This work restricts the valuations to being piecewise uniform and presents algorithms for optimal approximately proportional, envy-free allocations.

Our approach can also be viewed as a workaround to the limits on efficiency placed by the constraints of envy-freeness and connectedness, as presented by \citet{Aumann:2015:EFD:2810066.2781776}. The work of \citet{Arzi2011} attempts to do something similar by characterizing the improvement in utilitarian and egalitarian social welfare of envy-free Contiguous Allocations obtained by permitting throwing away a bounded portion of the cake. It is worth noting that throwing away a part of the cake can only improve the social welfare (dubbed as the ``dumping effect'') of envy-free Contiguous Allocations. In other words, there is no ``dumping effect'' for envy-free Non-Contiguous Allocations, because each disjoint dumped piece of cake could be treated as a separate cake to be allocated in an envy-free manner (Proposition 1 in \cite{Arzi2011}). Our constraint relaxation (allowing up to $k$ continuous pieces per player, instead of just one) is, perhaps, comparable to theirs (relaxing geometric constraints), in the sense that these are different concessions that can be made in order to improve social welfare. While the potential egalitarian welfare gain in both approaches turns out to be $O(n)$, it is worth noting that the utilitarian welfare gain can be as much as $O(n)$ in our approach (when $k\geq n$), whereas it is limited to at most $O(\sqrt{n})$ in that of \citet{Arzi2011}.\newline
Finally, we note the work of \citet{2017arXiv171009477N} regarding fair (i.e. envy-free) divisions of cake with each player receiving multiple pieces. This work proves existential results for the division of a cake in a manner that allows a lower bounded number of players to prefer mutually disjoint sets of $k$ pieces.

\section{Model and Notations}

We consider a  rectangular cake C$^{*}$, represented as the interval $[0,1]$, which is cut into disjoint intervals by cuts parallel to the left edge of the cake. The set of players is denoted by $N=\{1,2,3..,n\}$. Every player $i \in N$ has a valuation function $V_{i}$ : $ 2^{C^{*}} \rightarrow [0,1]$ . This valuation function has the following properties -

\begin{enumerate}
	\item Non-Negativity : $ V_{i}([a,b]) \geq 0$
	\item Additivity : For disjoint intervals $I_{1}$ and $I_{2}$, $ V_{i}(I_{1}  \cup I_{2} ) = V_{i}(I_{1}) + V_{i}(I_{2}) $'
	\item Normalization : $ V_{i} ([0,1]) = 1 $
	\item Well Defined : The valuation functions are well defined in nature. They do not distinguish between open and closed intervals , i.e., there are no point values - $ V_{i}([a,a]) = 0 $
	\item Divisibility For every interval [a,b] and $ 0 < f \le 1$  , there exists a sub-interval [c,d] i.e. $ a \le c < d \le b$   such that $ V_{i}([c,d]) = f . V_{i}([a,b]) $
	
\end{enumerate}

Additionally, each of these valuation functions $V_{i}$ have an underlying continuous, non negative value density function $d_{i}$ such that for every interval I $\subset$ [0,1], $ V_{i}(I) = \int_{x \in I} d_{i}(x) dx $ . The continuity (and finiteness) of the density function is what guarantees that the valuation function is well defined throughout the cake.\newline
A portion of the cake is a union of disjoint intervals of cake. A portion is continuous if it is a single interval of the cake. We define an allocation to be $A ={A_{1},A_{2},....A_{n}}$ with all the portions being mutually disjoint. Player $i$ gets portion $A_{i}$.\newline
Each node has a utility for any allocation- $U_{i} = V_{i}(A_{i})$. The social welfare of an allocation is the sum of the utilities of all the nodes.
Thus, an envy-free allocation has the property that $V_{i}(A_{i}) \ge V_{i}(A_{j}) \forall i,j \in N$. \newline
For notational convenience, we define the ``gain'' - $\tau_{f}$ and $\tau$ (with and without fairness respectively) of any cake cutting instance i with n players and k pieces.
\begin{equation}
\tau_{f}(n,k,i) = \dfrac{\max_{a\in Allocations_f(k)(i)}\mbox{Utilitarian Welfare }(a)}{\max_{a\in Allocations_f(1)(i)}\mbox{Utilitarian Welfare }(a)}
\end{equation}
\begin{equation}
\tau(n,k,i) = \dfrac{\max_{a\in Allocations(k)(i)}\mbox{Utilitarian Welfare }(a)}{\max_{a\in Allocations(1)(i)}\mbox{Utilitarian Welfare }(a)}
\end{equation}
Thus:
\begin{equation}POIU(n,k) = \max_{i \in I(n)} \tau (n,k,i) \end{equation}
\begin{equation}POIU_{f}(n,k) = \max_{i \in I(n)} \tau_{f}(n,k,i)\end{equation}

\subsection{A Restriction on Cake Cutting Instances}

We will restrict our analysis to a certain set of cake cutting instances $X$. This set $X$ is the union of sets $X(1), X(2), X(3),$... where $X(j)$ has the following definition.

\begin{definition}
Every cake cutting instance in $X(j)$ can be constructed in the following manner. Every player $i \in N$ ``owns'' j rectangular pieces $i_{1}$ to $i_{j}$ such that $V_{l}(i_{m})=0 \ for \ l \neq i \ and \ 1 \le m \le j$. The cake is any permutation of the these n.j rectangular pieces of cake.
\end{definition}
Note that $X(i) \subset X(j)$ if $i<j$.\newline
There are two reasons why we analyze cake cutting instances only in the set X.

\begin{enumerate}
    \item We are looking for cake cutting instances in which the optimum social welfare increases considerably when moving from Allocations(1) to Allocations(k). Intuitively, this gain is maximized when the cake is composed of distinct rectangular pieces without overlapping values, thus making it easier to pick k pieces of ``good'' value for each player (because nobody else values them). This intuition is formalized in a later section.
    \item Finding optimal allocations and optimal envy-free allocations is significantly easier for instances in X as compared to more general cake cutting instances.
\end{enumerate}

\section{The Price of Indivisibility}

\subsection{POIU without Fairness}

We will restrict our analysis to cake cutting instances in the set $X(k)$ and later prove that searching in this set will find the instance that maximizes $\tau$. Note that the numerator of $\tau$ - optimum utilitarian welfare in Allocations(k)- is always $n$ for such instances - as every player can get all the pieces owned by him (which are the only pieces he values).

\begin{lemma}
Any allocation maximizing utilitarian welfare for instances in $X(k)$ can be restricted to making cuts only on the boundaries between the $n.k$ pieces.
\end{lemma}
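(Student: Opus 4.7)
The plan is to exhibit a canonical allocation $A^{*}$ in which each player $i$ receives exactly the $k$ pieces $i_{1},\ldots,i_{k}$ owned by player $i$, and then argue that any welfare-maximizing allocation coincides with $A^{*}$ up to a modification that preserves welfare while removing any non-boundary cuts. Since $A^{*}$ places cuts only at the boundaries between the $nk$ rectangular pieces of cake, this will prove the lemma.

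First, I would verify that the maximum utilitarian welfare for any instance in $X(k)$ equals $n$. The upper bound follows from normalization: $\sum_{i}V_{i}(A_{i}) \leq \sum_{i}V_{i}([0,1]) = n$. The canonical allocation $A^{*}$ defined by $A^{*}_{i} = \bigcup_{m=1}^{k} i_{m}$ achieves this bound, since $V_{i}(A^{*}_{i}) = 1$ for every $i$ (using that $V_{i}$ vanishes outside $\bigcup_{m} i_{m}$ by the definition of $X(k)$). Each $A^{*}_{i}$ is a union of at most $k$ contiguous intervals---exactly $k$ if no two pieces owned by $i$ happen to be adjacent in the cake's permutation, and fewer otherwise---so $A^{*} \in Allocations(k)$, and all of its cuts lie on boundaries between the $nk$ pieces.

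Next, let $A \in Allocations(k)$ be any welfare-maximizing allocation. Since each $V_{i}(A_{i}) \leq 1$ and $\sum_{i} V_{i}(A_{i}) = n$, equality must hold termwise, i.e.\ $V_{i}(A_{i}) = 1$ for every $i$. Because $V_{i}$ is supported on $\bigcup_{m} i_{m}$, this equality forces $A_{i}$ to contain $\bigcup_{m} i_{m}$ up to a $d_{i}$-null set. Combined with mutual disjointness of the $A_{i}$'s and the fact that the $nk$ owned pieces partition the cake, this forces $A_{i} = A^{*}_{i}$ outside a set on which every density $d_{i}$ vanishes.

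The main obstacle is precisely this measure-zero caveat: if some $d_{i}$ vanishes on part of an owned piece, then $A$ might discard that sub-region and thereby introduce a cut in the interior of a piece without any loss of welfare. I would handle this by a direct absorption argument: any such discarded region lies inside a piece owned by some player $i$, so every other player values it at $0$ and $d_{i}$ vanishes on it by hypothesis; reassigning it back to $i$ therefore preserves welfare and cannot increase $i$'s contiguous-piece count (the region is adjacent to a portion of $i_{m}$ already held by $i$). Performing this absorption at every non-boundary cut of $A$ yields $A^{*}$, which is optimal and has cuts only on the boundaries between the $nk$ owned pieces, as required.
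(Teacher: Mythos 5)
There is a genuine gap, and it is one of scope. Your entire argument hinges on the identity $\sum_i V_i(A_i) = n$ for a welfare-maximizing allocation, which forces $V_i(A_i)=1$ termwise and pins $A$ down to the canonical allocation $A^*$ up to null sets. That identity holds only when the optimization is over a class of allocations rich enough to contain $A^*$ --- e.g.\ $Allocations(k)$. But the lemma is invoked in the paper precisely for the \emph{denominator} of $\tau$, i.e.\ for allocations maximizing utilitarian welfare within $Allocations(1)$ (each player gets a single interval), where $A^*$ is infeasible and the optimum is strictly below $n$; indeed the point of Theorem 3.2 is that this optimum can be as low as $\frac{2k}{2k-1}$ for two players. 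It is only because the lemma applies to the \emph{contiguous} optimum that the paper can reduce the two-player search to the $2k-1$ candidate cut points $C_1,\dots,C_{2k-1}$. Your proof says nothing about that case: when the constrained optimum is below $n$, you cannot conclude $V_i(A_i)=1$, the forcing argument collapses, and nothing rules out an optimal contiguous allocation with an interior cut a priori.

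The paper's proof is instead a local exchange argument that is agnostic to the allocation class: if a cut $C$ falls in the interior of a piece owned by player $P$ and splits it between players $P_1$ and $P_2$, then either $P \notin \{P_1,P_2\}$, in which case the split region is worthless to both and the cut can be slid to either boundary with no change in welfare, or $P \in \{P_1,P_2\}$, in which case handing $P$ the whole piece weakly increases welfare. Neither modification increases any player's count of contiguous intervals, so the argument applies verbatim to optima within $Allocations(1)$ (and any $Allocations(j)$). Your absorption step is a correct fragment of this idea for the special case you treat, but to repair the proof you should discard the ``optimum equals $n$'' scaffolding and run the local exchange directly on an arbitrary welfare-maximizing allocation in the relevant class.
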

\begin{proof}
Assume that there exists some allocation with a cut C not on the boundaries of the $n.k$ pieces. Consequently, this cut divides one of the $n.k$ piece (which belongs to player P) into two halves with each half going to a different players P1 and P2. Now, there are two cases, when player P is one of these two players, and when P is not. In the latter case, since neither P1 and P2 have any value for the piece, we can move the cut C to either end of the piece without changing utilitarian welfare. In the former case, we can in fact improve utilitarian welfare by giving the entire piece to player P. Thus, we can restrict our search to only allocations that maximize utilitarian welfare to only those which have cuts on the boundaries. Clearly, this proof also works for multiple cuts within a piece, the two cases being whether player P is among the set of players receiving the divided pieces or not.
\end{proof}
This lemma is of particular utility to us because we can now treat finding an allocation with optimum utilitarian welfare as a combinatorial optimization problem.\newline
We begin with the result for the two player case.
\subsubsection{The Two Player Case}

\begin{theorem}
We present a tight bound on POIU for the 2 player case : \newline
$ \max_{i \in X(k)} \tau (2,k,i) = 2- \dfrac{1}{k} $
\end{theorem}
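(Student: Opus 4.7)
The plan is to reduce the problem to bounding the maximum deviation of a certain walk over the piece boundaries. For any two-player instance $i \in X(k)$, the numerator of $\tau(2,k,i)$ equals $2$, because assigning each player all $k$ pieces they own is a valid $k$-piece allocation giving both players utility $1$ by normalization. Hence $\tau(2,k,i) = 2/W_c(i)$, where $W_c(i)$ is the optimal utilitarian welfare over contiguous allocations, and it suffices to prove (a) $W_c(i) \ge \tfrac{2k}{2k-1}$ for every such $i$, and (b) an instance achieving equality.

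For the upper bound, by essentially the same argument as in the preceding lemma, an optimal contiguous allocation can be taken to cut at one of the $2k-1$ boundaries between the consecutive pieces $P_1, \dots, P_{2k}$. For each $m \in \{0, 1, \dots, 2k\}$ let $S_j(m)$ denote Player~$j$'s cumulative value on $P_1 \cup \dots \cup P_m$, and set $D(m) := S_1(m) - S_2(m)$. Choosing the better of the two side-assignments at cut $m$ yields welfare $1 + |D(m)|$, so $W_c(i) = 1 + \max_m |D(m)|$, reducing the problem to lower-bounding $M := \max_m |D(m)|$ for a walk with $D(0) = D(2k) = 0$ whose $k$ positive steps sum to $+1$ and $k$ negative steps sum to $-1$.

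To execute this, I would observe that $|D(1)| \le M$ and $|D(2k-1)| \le M$ since both endpoints of $D$ vanish, while each of the $2k-2$ interior increments $|D(m) - D(m-1)|$ is at most $2M$ by the triangle inequality. Because the absolute step sizes sum to exactly $2$, adding the bounds gives $2 \le M + M + (2k-2)(2M) = (4k-2)M$, hence $M \ge \tfrac{1}{2k-1}$, $W_c(i) \ge \tfrac{2k}{2k-1}$, and $\tau(2,k,i) \le 2 - \tfrac{1}{k}$.

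For the matching lower bound I would take the alternating arrangement (odd positions owned by Player~$1$, even by Player~$2$) with value $\tfrac{1}{2k-1}$ at positions $1$ and $2k$ and value $\tfrac{2}{2k-1}$ at every interior position: both valuations sum to $1$, the walk $D$ oscillates between $\pm\tfrac{1}{2k-1}$, every piece-boundary cut yields contiguous welfare exactly $\tfrac{2k}{2k-1}$, and $\tau = 2 - \tfrac{1}{k}$ is achieved. The main obstacle I anticipate is the factor-of-two sharpening at the two endpoint steps of the walk: if one ignored the boundary conditions $D(0) = D(2k) = 0$ and only used $|D(m) - D(m-1)| \le 2M$ for \emph{every} step, one would obtain merely $M \ge \tfrac{1}{2k}$, which is too weak; the tight example confirms that these endpoint savings are precisely what close the gap to $2 - \tfrac{1}{k}$.
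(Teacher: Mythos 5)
Your proposal is correct, and it reaches the key bound $W_c(i)\ge\frac{2k}{2k-1}$ by a genuinely different route than the paper. Both arguments first fix the numerator at $2$, invoke the boundary-cut lemma to reduce to the $2k-1$ candidate cuts, and exhibit essentially the same tight instance (alternating ownership, corner pieces of value $\frac{1}{2k-1}$, interior pieces of value $\frac{2}{2k-1}$). The paper, however, proves that the welfares at the $2k-1$ cuts \emph{sum} to at least $2k$ (its Lemma on $\sum_{1}^{2k-1} UW_{C_i}\ge 2k$), via the decomposition $\sum UW_{C_i}=\sum_{i=1}^{2k-2}(L_i+R_{i+1})+R_1+L_{2k-1}$ and a four-subcase analysis showing each paired term is at least $1$; the maximum then dominates the average $\frac{2k}{2k-1}$. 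You instead write the welfare at boundary $m$ as $1+|D(m)|$ for the difference walk $D=S_1-S_2$ and lower-bound $M=\max_m|D(m)|$ by comparing the walk's total variation (exactly $2$) against the per-step bounds, exploiting the factor-of-two savings at the two endpoint steps where $D$ vanishes; this avoids the case analysis entirely and makes the extremal structure (a walk oscillating between $\pm\frac{1}{2k-1}$) transparent. Two small remarks: first, your identity ``the absolute step sizes sum to exactly $2$'' uses the $X(k)$ structure (the non-owner values each piece at zero, so step $m$ equals $\pm$ the owner's value of $P_m$), and this is precisely where the restriction to $X(k)$ enters, so it deserves an explicit sentence; second, be aware that the paper's sum-over-cuts formulation is reused in its subsequent lemma extending the result from $X(k)$ to all of $I(2)$, so if you intend to prove the full $POIU(2,k)$ claim rather than only the $\max_{i\in X(k)}$ version stated here, you would still need an analogue of that extension step.
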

	
\begin{proof}
Our proof is constructive in nature. Firstly, the numerator of $\tau$ is fixed to $2$ because of the assumption of searching for allocations in $X(k)$. Therefore, we only need to think about minimizing the denominator, that is, the optimum utilitarian welfare within Contiguous Allocations. Since we are dealing with two players here, that is equivalent to finding a single cut point on the cake that maximizes the utilitarian welfare of the resulting allocation. Note that for any cut point, deciding who gets the left side and who the right is obvious from the point of view of optimizing utilitarian welfare, the left piece is given to the player who values it more and as a consequence, the right piece is valued more by the other player (conveniently for us).
\newline
Additionally, because of Lemma 1, we only need to find which one among the $2k-1$ potential cut points maximizes the social welfare. Let us label these cut points - $ C = [C_{1},C_{2},....,C_{2k-1}]$. Thus, for a cake cutting instance in X(k):
\[ \max_{a\in Allocations(1)}\mbox{Utilitarian Welfare }(a) = \max_{c \in C} \{max\{V_{1}([0,c]) + V_{2}([c,1]), V_{2}([0,c]) + V_{1}([c,1]) \}\}\]
Intuitively, it seems that reducing the utilitarian welfare at one of the cut points comes at the cost of increasing it elsewhere. We will formalize this in the next lemma where we prove that the sum of the utilitarian welfare across all the cuts is at least 2k. Armed with this lemma and the expression for optimum utilitarian welfare above, we can say that:  $\max_{a\in Allocations(1)}\mbox{Utilitarian Welfare }(a) \ge \dfrac{2k}{2k-1}$. Since the sum of the $2k-1$ possible values of utilitarian welfare is fixed to be at least $2k$ and the optimum utilitarian welfare is the maximum of the $2k-1$ values, the smallest possible value of optimum utilitarian welfare occurs when all the $2k-1$ possible values are equal and add up to exactly $2k$. To complete the proof, we construct a cake cutting instance in $X(k)$ where $\max_{a\in Allocations(1)}\mbox{Utilitarian Welfare }(a)$ is exactly $\dfrac{2k}{2k-1}$. We follow this up with a lemma that proves that the sum of social welfares across the cut points in C is lower bounded by $2k$.\newline We begin with an instance (Fig 1) for the case of $k=2$.
\begin{figure}[h!]
	\centering
	\includegraphics{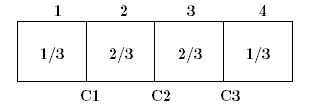}
	\caption{}
	\label{fig:img20170710113647867}
\end{figure}

It is obvious that the utilitarian welfare ($UW$) across each of the cuts C1, C2 and C3 is equal to $1 + \dfrac{1}{3}$. To explain the construction this instance, let the corner pieces have value $a$ and $b$ and the center piece have value $1-a$ and $1-b$ - for players 1 and 2 respectively. Let us assume that $a,b < \dfrac{1}{2} $. Thus,
\[ UW_{C1} = 1 + a\]
\[ UW_{C2} = 2 - (a+b)\]
\[ UW_{C3} = 1+ b \]
Now, we want to have $UW_{C_{1}} = UW_{C_{2}} = UW_{C_{3}}$, this gives us $a = b = \dfrac{1}{3}$ . Onto the more general case with $k$ pieces, let the value of the corner piece (of say, player 1 - although the reasoning applies to both players) be $x$. Therefore, $UW_{C_{1}} = 1+x$. Let the value of the piece [$C_{1},C_{2}$] be $y$ (to player 2). Consider the cut at C2, we want $UW_{C_{1}} = UW_{C_{2}}$ - thus the allocation has to switch sides (by which we mean that the left side piece and the right side pieces are swapped between the players) from C1 (else $UW_{C_{1}} > UW_{C_{2}}$ because of the drop in value for the player who gets the right side piece). Therefore,

\[UW_{C_{2}} = 1 + (y-x)  = UW_{C_{1}} = 1+x \]
\[ \implies y=2x\]

Let the value of [$C_{2},C_{3}$] be z (to player 1). Again, the allocation has to switch sides from C2 (else $UW_{C_{2}} > UW_{C_{3}}$ because of the drop in value for the player who retains the left side piece). Thus,

\[UW_{C_{3}} = 1 + (z-x) = UW_{C_{1}} = 1+x \]
\[ \implies z=2x\]

We extend this result using induction, that is, all pieces but the corner piece have value $2x$ if the corner piece has value $x$. Consider any piece $[C_{i},C_{i+1}]$ belonging to player P (either 1 or 2) of value $a$. We know that $SW_{C_{i}} = 1+x$, and that the allocation has to swap sides at cut $C_{i+1}$. Thus,
\[UW_{C_{i+1}} = (2 - UW_{C_{i}}) + a = UW_{C_{1}} = 1+x\]
\[\implies 1 + (a-x) = 1 +x \]
\[\implies a = 2x \]

Finally, we normalize the values of the pieces, each player has one (corner) piece of value $x$ and $(k-1)$ pieces of value $2x$.
\[\implies x + (k-1)*2x = 1 \]
\[\implies x = \dfrac{1}{2k-1} \]
\[\implies \max_{a\in Allocations(1)}\mbox{Utilitarian Welfare }(a)= 1 + \dfrac{1}{2k-1}\]

This is exactly the value that is our lower bound. Consequently,

\[\max_{i \in X(k)} \tau (2,k,i)  = \dfrac{2}{1+\dfrac{1}{2k-1}}\]
\[\implies \max_{i \in X(k)} \tau (2,k,i)  = 2 - \dfrac{1}{k}\]
\end{proof}

We now prove the lower bound on the sum of social welfares across the cuts in C to complete the proof of theorem 3.2.
\begin{lemma}
	\[ \sum_{1}^{2k-1} UW_{C_{i}} \ge 2k\]
\end{lemma}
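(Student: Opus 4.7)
My plan is to reduce the claim to a one-line application of the triangle inequality by introducing a convenient single-variable summary of each cut. Define $d_i := V_1([0, C_i]) - V_2([0, C_i])$ for $i = 0, 1, \ldots, 2k$, with the convention that $C_0 = 0$ and $C_{2k} = 1$. Using the normalization $V_j([C_i, 1]) = 1 - V_j([0, C_i])$, the two quantities inside the max defining $UW_{C_i}$ simplify to $1 + d_i$ and $1 - d_i$, hence $UW_{C_i} = 1 + |d_i|$. Summing across the $2k-1$ cut points, the target inequality $\sum_{i=1}^{2k-1} UW_{C_i} \geq 2k$ therefore reduces to the cleaner statement $\sum_{i=1}^{2k-1} |d_i| \geq 1$.

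Next, I exploit the structure of instances in $X(k)$: each of the $2k$ rectangular pieces is valued by exactly one player, so if $v_j$ denotes the value of the $j$-th piece (from left to right) to its owner, then $d_j - d_{j-1} = \pm v_j$ depending on which player owns piece $j$, and in particular $|d_j - d_{j-1}| = v_j$. The normalization also gives $d_0 = d_{2k} = 0$, and $\sum_{j=1}^{2k} v_j = V_1([0,1]) + V_2([0,1]) = 2$, since the pieces owned by each player partition that player's unit of value.

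The last step is the triangle inequality applied piece-by-piece: for every $j \in \{1, \ldots, 2k\}$, $|d_j| + |d_{j-1}| \geq |d_j - d_{j-1}| = v_j$. Summing this over $j$, every interior $d_i$ (with $1 \leq i \leq 2k - 1$) appears in exactly two terms on the left, while the boundary contributions $|d_0|$ and $|d_{2k}|$ vanish. This yields $2 \sum_{i=1}^{2k-1} |d_i| \geq \sum_{j=1}^{2k} v_j = 2$, and therefore $\sum_{i=1}^{2k-1} |d_i| \geq 1$, which is exactly what is needed. I do not anticipate any genuine obstacle in executing this plan; the only nontrivial ingredient is the observation that $UW_{C_i} = 1 + |d_i|$, which collapses the cutwise maximum and exposes the underlying ``random walk'' structure in which the triangle inequality immediately supplies the required bound.
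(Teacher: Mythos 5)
Your proof is correct, and it takes a genuinely different route from the paper's. The paper writes $UW_{C_i}=L_i+R_i$, regroups the sum as $\bigl(\sum_{i=1}^{2k-2}(L_i+R_{i+1})\bigr)+R_1+L_{2k-1}$, observes that the two boundary terms equal $1$, and then establishes $L_i+R_{i+1}\ge 1$ by a case analysis on which player owns the intermediate piece $[C_i,C_{i+1}]$ and on whether the optimal one-cut assignment ``flips'' between consecutive cuts. You instead collapse each cutwise maximum into the single scalar identity $UW_{C_i}=1+|d_i|$ with $d_i=V_1([0,C_i])-V_2([0,C_i])$, reduce the claim to $\sum_{i=1}^{2k-1}|d_i|\ge 1$, and obtain that from the triangle inequality applied to the walk $d_0,\dots,d_{2k}$, which starts and ends at $0$ and has total variation $\sum_j|d_j-d_{j-1}|=\sum_j v_j=2$ precisely because each piece in an $X(k)$ instance is valued by exactly one player. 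Your argument buys a shorter, case-free proof that isolates exactly where the $X(k)$ hypothesis enters (a general instance can have all $d_i=0$), and it makes the equality condition transparent: the bound is tight exactly when the $|d_i|$ are equal and the signs alternate, which is the extremal construction the paper builds separately. The paper's pairing argument, on the other hand, tracks the combinatorial ``flipping'' behavior explicitly, which is the language reused later in Lemma 3.4, so it integrates more directly with the surrounding development. Both proofs are complete; yours is the more economical.
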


\begin{proof}
	We will prove this inequality by rewriting the sum and grouping terms. Let $UW_{C_{i}} = L_{i} + R_{i} $ where $L_{i}$ and $R_{i}$ represents the respective utilities of the players who receive the left and right side pieces for cut $C_{i}$ . We rewrite the sum as:
	\[ \sum_{1}^{2k-1} UW_{C_{i}} = (\sum_{1}^{2k-2} L_{i} + R_{i+1}) + R_{1} + L_{2k-1} \]
	
	We know that $L_{2k-1}$ and $ R_{1}$ are both equal to $1$, because the corner piece goes to the player to whom it belongs and the rest of the cake (which contains the entire value of the other player) goes to the other player.
	
	Consider each of the $2k-2$ terms in the other sum, we claim that :
	\[ \forall i \  in \{1,2,3,...,2k-2\} : L_{i} + R_{i+1} \ge 1 \]
	
	To prove this claim, consider the cuts at $C_{i}$ and $C_{i+1}$ - without loss of generality, we assume that player 1 has the left side piece for the allocation at cut $C_{i}$ there are two cases - whether the piece $[C_{i},C_{i+1}]$ belongs to player 1 or 2 , each with two sub-cases whether the allocation switches sides or not from $C_{i}$ to $C_{i+1}$. Let the left piece at $C_{i}$ (i.e.$[0,C_{i}]$) have value $x$ to player 1 and $y$ to player 2. Because of our assumption, $x>y$. Also, depending on the case, $[C_{i},C_{i+1}]$ has value $x'$ to player 1 or value $y'$ to player 2.
	
	There are two cases, with two sub-cases each.
	\begin{description}
	
	\item[Case 1:] In Fig 2, Piece $[C_{i},C_{i+1}]$ belongs to Player 1 \newline
	
	\begin{figure}[h!]
		\centering
		\includegraphics{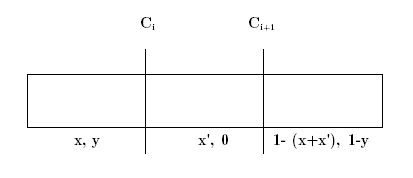}
		\caption{}
		\label{fig:case-1}
	\end{figure}
	
	\begin{description}
	\item[Sub-Case 1:] Allocation does not flip
	
	\[ L_{i} + R_{i+1} = x + (1-y) = 1+ (x-y) >1 as x>y\]
	
	\item[Sub-Case 2]: Allocation Flips \newline
	
	This is not possible as $(x+x')>y$ as $x>y$ and all valuations are non-negative.
	\end{description}
	\item[Case 2:] In Fig 3, Piece $[C_{i},C_{i+1}]$ belongs to Player 2 \newline
	\begin{figure}[h!]
		\centering
		\includegraphics{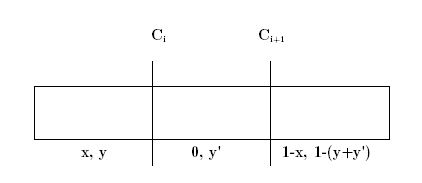}
		\caption{}
		\label{fig:case-2}
	\end{figure}
	\item[Sub-Case 1:] Allocation Flips
	\[\implies x < y+y' \]
	\[  L_{i} + R_{i+1} = x + (1-x) =1 \]
	\item[Sub-Case 2:] Allocation does not flip
	\[\implies x> y+y' \]
	\[ L_{i} + R_{i+1} = x + 1 - (y+y') = 1 + (x - (y+y')) \ge 1\]

    \end{description}
	
	Thus, each of the $2k-2$ terms is at least $1$.
	
	\[\implies \sum_{1}^{2k-1} UW_{C_{i}} \ge (2k-2) + 1 +1 \]
	\[\implies \sum_{1}^{2k-1} UW_{C_{i}} \ge 2k \]
	
	Hence Proved.
\end{proof}

So far, we have restricted our search for the $POIU(2,k)$ to the set of cake cutting instances X(k). However, the price of indivisibility is defined over the set of all cake cutting instances $I(2)$ for two players.

\begin{equation}
POIU(n,k) = \max_{i \in I(n)}\dfrac{\max_{a\in Allocations(k)(i)}\mbox{Utilitarian Welfare }(a)}{\max_{a\in Allocations(1)(i)}\mbox{Utilitarian Welfare }(a)}
\end{equation}

Basically, we are setting the numerator to be $n$ and trying to find the minimum possible value of the denominator.

So far we have the result that:

\[  \max_{i \in X(k)} \tau (2,k,i) = 2- \dfrac{1}{k}  \]
 and a corresponding instance $I'$ which shows exactly this gain (with the corresponding values of numerator and denominator). Now, if we were to shift to a generic cake cutting instance $I^{*}$ ($I^{*} \notin X(k)$, no point looking in $X(k)$ anymore) from I' - there is a resultant decrease in the numerator as well as the denominator - $l_{n}\ and\ l_{d}$. Note that $0 \le l_{n} \le 1$ (as the optimum utilitarian welfare is always greater than 1) and $0 \le l_{d} \le \dfrac{1}{2k-1}$   Thus:
 \[ \tau(I^{*}) = \dfrac{2-l_{n}}{\dfrac{2k}{2k-1}-l_{d}} \]
 We want to prove that
 \[ \tau(I^{*}) < 2 - \dfrac{1}{k} \]
 which comes down to proving that:
 \[ \dfrac{l_{n}}{l_{d}} > \dfrac{2k-1}{k} \]

 \begin{lemma}
 	\[\dfrac{l_{n}}{l_{d}} \ge 2k-1 \]
 	
 	This lemma states that whatever loss is caused (from $\dfrac{2k}{2k-1}$) in the denominator, the optimum utilitarian welfare with connected pieces - that loss is magnified by $2k-1$ in the numerator, i.e optimum social welfare with each player getting $k$ connected pieces.
 \end{lemma}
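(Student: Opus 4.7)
The plan is to reduce the statement to an equivalent structural inequality $(2k-1)\,\mathrm{opt}_1(I^*) \ge \mathrm{opt}_k(I^*) + (2k-2)$, where $\mathrm{opt}_1$ and $\mathrm{opt}_k$ denote the optimal contiguous and $k$-contiguous utilitarian welfares respectively, and then to prove it by generalizing Lemma 2's cut-sum technique to the cuts induced by any optimal $k$-contiguous allocation of $I^*$.

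First, I would algebraically rewrite $l_n \ge (2k-1)\,l_d$ using $l_n = 2 - \mathrm{opt}_k(I^*)$ and $l_d = \tfrac{2k}{2k-1} - \mathrm{opt}_1(I^*)$; a straightforward rearrangement yields the equivalent target $(2k-1)\,\mathrm{opt}_1(I^*) \ge \mathrm{opt}_k(I^*) + (2k-2)$, which decouples the argument from the reference instance $I'$. Two sanity checks hold with equality: the extremal $X(k)$ instance gives $2k = 2 + (2k-2)$, and the uniform-density instance (where $V_1$ and $V_2$ are both Lebesgue measure on $[0,1]$) gives $(2k-1) \cdot 1 = 1 + (2k-2)$.

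Second, I would fix an optimal $k$-contiguous allocation of $I^*$, which (without loss of generality) partitions the cake into $m \le 2k$ intervals $I_1, \ldots, I_m$ with assignments $p_j \in \{1,2\}$. Write $u_j = V_{p_j}(I_j)$ and $w_j = V_{\mathrm{other}}(I_j)$, so $\sum_j u_j = \mathrm{opt}_k(I^*)$ and $\sum_j (u_j + w_j) = V_1(C^*) + V_2(C^*) = 2$, giving $\sum_j w_j = 2 - \mathrm{opt}_k(I^*)$. For each of the $m-1$ interval boundaries $C_i$, compute the optimal single-cut welfare $UW_{C_i} = L_i + R_i$ and regroup the sum as in Lemma 2: $\sum_{i=1}^{m-1} UW_{C_i} = R_1 + \sum_{i=1}^{m-2}(L_i + R_{i+1}) + L_{m-1}$.

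Third, I would establish the generalized cut-sum bound $\sum_{i=1}^{m-1} UW_{C_i} \ge \mathrm{opt}_k(I^*) + (m-2)$. The boundary terms compute directly to $R_1 = 1 - w_1$ and $L_{m-1} = 1 - w_m$, since the optimal single-cut assignment at $C_1$ (resp.\ $C_{m-1}$) always gives $I_1$ (resp.\ $I_m$) to its owning player; the key claim is that each intermediate pair satisfies $L_i + R_{i+1} \ge 1 - w_{i+1}$, with the waste charged exactly to the interval $I_{i+1}$ lying between the two cuts. Summing gives $(1 - w_1) + (1 - w_m) + \sum_{i=1}^{m-2}(1 - w_{i+1}) = m - \sum_j w_j = m - (2 - \mathrm{opt}_k(I^*)) = \mathrm{opt}_k(I^*) + (m-2)$. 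Then $\mathrm{opt}_1(I^*) \ge \max_i UW_{C_i} \ge \frac{\mathrm{opt}_k(I^*) + (m-2)}{m-1} \ge \frac{\mathrm{opt}_k(I^*) + (2k-2)}{2k-1}$, where the last inequality uses $\mathrm{opt}_k(I^*) \ge 1$ (immediate for two players, since at any cut point the two candidate assignments sum to $2$) together with the monotonicity of $\frac{a + x - 1}{x}$ in $x$ when $a \ge 1$.

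The hard part will be Step 3's generalized subcase analysis, specifically the intermediate bound $L_i + R_{i+1} \ge 1 - w_{i+1}$. Adapting Lemma 2's four-subcase analysis (two cases for the owner of $I_{i+1}$, and two sub-cases for whether the optimal single-cut assignment flips between $C_i$ and $C_{i+1}$) requires carrying the $w$-terms through each branch, and the argument must gracefully handle degenerate configurations such as adjacent intervals assigned to the same player (which effectively collapses a cut) or $p_1 = p_m$ (which rebalances the boundary contributions). A clean resolution should be achievable either by strengthening the ownership-flip case analysis to track $w_{i+1}$ directly, or via a charging argument that routes each unit of wasted value to the appropriate pair so the slack budget telescopes correctly.
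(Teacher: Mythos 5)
Your proposal is correct, and it takes a genuinely different route from the paper. The paper proves the lemma by an induction that walks left to right over the cuts of the extremal instance $I'$, pairing them with cuts of the general instance $I^{*}$ and arguing that each step ``reverses the loss'' in the denominator and adds a fresh $l_d$, so that $l'_{n,i}\ge i\,l_d$; this keeps the reference instance $I'$ in the picture throughout and leans on its special structure (equal welfare at all $2k-1$ cuts, flipping at every cut) together with an informal correspondence between the cut sets of two different instances. You instead eliminate $I'$ entirely by rewriting the claim as $(2k-1)\,\mathrm{opt}_1(I^*) \ge \mathrm{opt}_k(I^*) + (2k-2)$ and then generalize the cut-sum lemma (Lemma 3.3) from $X(k)$ to arbitrary instances, with the correction term $-w_{j}$ charging the cross-valuation of each interval. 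I have checked that your key intermediate inequality $L_i + R_{i+1} \ge 1 - w_{i+1}$ does close under the four-way case analysis: in the no-flip case the flip condition itself supplies $x-y \ge a_2 - a_1$, and in the flip case the sub-case where the in-between interval is owned by the left-receiving player is vacuous. The one ingredient you should state explicitly is that in an optimal $k$-contiguous allocation every interval is weakly preferred by its recipient ($u_j \ge w_j$); this follows from a one-line exchange argument (reassigning an interval to the other player merges adjacent intervals and never increases either player's piece count, so optimality forbids a profitable swap), and it is what justifies both the boundary evaluations $R_1 = 1-w_1$, $L_{m-1}=1-w_m$ and the vacuity of the bad flip sub-case. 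Your version buys a cleaner and more rigorous argument: it avoids the paper's implicit assumption that the optimal $k$-allocation of $I^{*}$ uses exactly $2k-1$ cuts in correspondence with those of $I'$ (you handle any $m\le 2k$ via monotonicity of $1+\frac{a-1}{x}$ using $\mathrm{opt}_k\ge 1$), and the two sanity checks with equality confirm the bound is tight. Only the degenerate case $m=1$ needs a separate (trivial) sentence.
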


 \begin{proof}
 	The intuition is that instance $I'$ has $2k-1$ clearly defined cut points (the cut set C) such that the social welfare is equal across all these cut points (and their sum has a fixed lower bound). Additionally, the allocation flips at each cut point. The reduction $l_{d}$ must be effected across all these cut points as the one with the greatest social welfare is picked. The same $2k-1$ points are the exact set of cut points for an optimum allocation with $k$ pieces in the instance $I'$.\newline
 	First we define a variables to capture the current loss in the numerator - $l'_{n,i}$. We know $2k-1$ cuts are made in total when each player gets $k$ connected pieces (such that the social welfare is optimal) - we name this cut set C'. We also know that the pieces to player 1 and 2 alternate. As part of our proof, we make the cuts one by one from left to right and cumulatively measure the loss in the numerator given the constraint that the loss in the denominator has to be $l_{d}$. There is a direct one to one mapping between the cut set C in instance $I'$ and the cut set C in instance $I^{*}$ in our proof where $C'_{i}$ maps to $C_{i}$. We define $UW_{i}$ and $UW'_{i}$ as the respective utilitarian welfares in $I^{*}$ and I' after the leftmost i cuts have been made.
 	\[ l'_{n,i} = UW'_{i} - UW_{i} \]
 	
 	Clearly $l'_{n,2k-1} = l_{n}$.
 	
 	We prove by induction that \[l'_{n,i} \ge i(l_{d}) \]

 	Consider the base case-  that is, only the first cut is made:
 	\[ l'_{n,1} \ge l_{d} \] is trivially true, otherwise we have an allocation with continuous pieces with the loss in the denominator being less than $l_{d}$.
 	
 	Now, let the claim be true for some $i$. Now, consider what happens at $i+1$, firstly the allocation with connected pieces flips at this point - so what was earlier the loss until cut $C'_{i}$ - (at least) $l_{d}$ in the denominator now becomes the denominator gain (with respect to instance $I'$ and cut $C_{i+1}$ - i.e $\dfrac{2k}{2k-1}$) because of the flipping, thus we need to have a denominator loss of (at least) $l_{d}$ in the new allocation with cut $C'_{i+1}$ added to maintain the denominator constraint and thus this results in completely ``reversing the loss'' in the numerator and adding an extra loss of $l_{d}$ . Thus, this takes up the total loss in the numerator up to $(i+1)l_{d}$. Hence, our induction argument is complete. Thus,
 	\[ l_{n} \ge l'_{n,2k-1} \ge (2k-1)l_{d}\]
 	
 	\[\implies \dfrac{l_{n}}{l_{d}} \ge 2k-1 \ge \dfrac{2k-1}{k}  \]
 	
 	This completes our proof for the value of $POIU$ for the $n=2$ case.

 \end{proof}

\begin{theorem}
\[ POIU(2,k) = 2 - \dfrac{1}{k} \]
\end{theorem}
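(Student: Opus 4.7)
The plan is to observe that by the time we reach this final theorem, essentially all the work has already been assembled, and the statement follows by combining the two preceding results into a single bound over the full instance space $I(2)$.

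First I would establish the lower bound $POIU(2,k) \ge 2 - \frac{1}{k}$. This is immediate: the explicit construction in the proof of Theorem 3.2 exhibits an instance $I' \in X(k) \subseteq I(2)$ achieving $\tau(2,k,I') = 2 - \frac{1}{k}$, and since $POIU(2,k) = \max_{i \in I(2)} \tau(2,k,i)$, we already get the desired inequality. No additional work is needed here.

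For the upper bound $POIU(2,k) \le 2 - \frac{1}{k}$, I would argue as follows. Fix an arbitrary instance $I^* \in I(2)$ and compare it to the extremal instance $I' \in X(k)$. Letting $l_n$ and $l_d$ denote the decrease in numerator and denominator of $\tau$ when moving from $I'$ to $I^*$, we have
\[
\tau(2,k,I^*) = \frac{2 - l_n}{\tfrac{2k}{2k-1} - l_d},
\]
with $0 \le l_n \le 1$ and $0 \le l_d \le \tfrac{1}{2k-1}$. A direct calculation shows that $\tau(2,k,I^*) \le 2 - \frac{1}{k}$ is equivalent to $\frac{l_n}{l_d} \ge \frac{2k-1}{k}$, and Lemma (on $l_n/l_d$) already gives the stronger bound $\frac{l_n}{l_d} \ge 2k-1$. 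Since this holds for every $I^* \in I(2)$, taking the maximum yields the upper bound.

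Combining the two bounds gives the exact value $POIU(2,k) = 2 - \frac{1}{k}$. The only subtlety I would flag is the degenerate case $l_d = 0$: if moving to $I^*$ causes no loss in the contiguous optimum, then $\tau(2,k,I^*) \le \tau(2,k,I') = 2 - \frac{1}{k}$ follows trivially since $l_n \ge 0$, so the ratio $l_n/l_d$ argument only needs to handle $l_d > 0$. There is no real obstacle at this stage, because the main technical work (the combinatorial reduction via Lemma 1, the construction achieving equality, the telescoping lower bound $\sum UW_{C_i} \ge 2k$, and the induction establishing $l_n \ge (2k-1)l_d$) has all been carried out in the preceding theorem and lemmas; the final theorem is essentially a packaging step that lifts the restricted result on $X(k)$ to the full instance class $I(2)$.
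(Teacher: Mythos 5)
Your proposal is correct and follows essentially the same route as the paper: the theorem is obtained by combining the explicit $X(k)$ construction (which gives the lower bound via Theorem~3.2) with the lemma establishing $l_n/l_d \ge 2k-1$ (which lifts the upper bound from $X(k)$ to all of $I(2)$). Your explicit handling of the degenerate case $l_d = 0$ is a small but worthwhile addition that the paper leaves implicit.
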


\subsection{POIE without Fairness}

To analyze the price of indivisibility with respect to egalitarian welfare, we again choose cake cutting instances in $X(k)$ for analysis because we easily obtain strong lower bounds for $POIE$ by searching in this space. Note that the optimum envy-free egalitarian welfare in Allocations(k)- is always $1$ for such instances - as every player can get all the pieces owned by him (which are the only pieces he values).\newline
We begin by proving two lower bounds for egalitarian welfare (EW) with connected pieces.

\begin{lemma}
	$\forall i \in I(n) :\mbox{Optimal EW}(n,k,i) \ge \dfrac{1}{n}$, where I represents the set of all cake cutting instances
\end{lemma}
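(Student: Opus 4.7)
The plan is to prove the bound by exhibiting a contiguous allocation in which every player receives value at least $1/n$. Since allowing up to $k$ pieces per player can only increase the optimal egalitarian welfare compared to the contiguous ($k=1$) case, it suffices to establish the bound for contiguous allocations. The natural tool here is the classical Dubins--Spanier moving-knife argument, which I would recast as an induction on $n$. To make the induction go through cleanly, I would strengthen the statement to: for any $n$ players with continuous densities on an interval $J$ (not necessarily normalized), there exists a partition of $J$ into $n$ contiguous pieces $A_1, \dots, A_n$ with $V_i(A_i) \ge V_i(J)/n$ for every $i$. Applying this strengthening to $J = [0,1]$ with the normalization $V_i([0,1]) = 1$ immediately yields the lemma.

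For the induction, the base case $n=1$ is trivial: the lone player receives all of $J$. For the inductive step on an interval $J = [\alpha, \beta]$, consider for each player $i$ the function $f_i(c) = V_i([\alpha, c])$. By the continuity and non-negativity of the density $d_i$, each $f_i$ is continuous and non-decreasing with $f_i(\alpha) = 0$ and $f_i(\beta) = V_i(J)$. Define
\[
c^{*} = \inf\bigl\{\, c \in [\alpha,\beta] : f_j(c) = V_j(J)/n \text{ for some } j \,\bigr\},
\]
and let $j$ be a player attaining the infimum (existence follows from continuity and a standard compactness argument). Allocate $[\alpha, c^{*}]$ to player $j$, so $V_j([\alpha,c^{*}]) = V_j(J)/n$. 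By the minimality of $c^{*}$, every other player $i$ satisfies $f_i(c^{*}) \le V_i(J)/n$, and therefore $V_i([c^{*},\beta]) \ge (n-1)\,V_i(J)/n$. The induction hypothesis applied to the remaining $n-1$ players on the sub-interval $[c^{*},\beta]$ then yields a contiguous allocation of $[c^{*},\beta]$ in which each such player $i$ receives value at least
\[
\frac{V_i([c^{*},\beta])}{n-1} \;\ge\; \frac{(n-1)\,V_i(J)/n}{n-1} \;=\; \frac{V_i(J)}{n},
\]
as required. The allocation remains contiguous because $j$ gets a single left-piece and the remaining players partition the right-piece into contiguous segments by the inductive hypothesis.

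The main obstacle is ensuring that the inductive hypothesis is stated in a form that survives the recursion: after cutting off $[\alpha, c^{*}]$, the residual valuations on $[c^{*},\beta]$ are no longer normalized, so one cannot simply quote the lemma as originally stated. Phrasing the hypothesis in terms of the fraction $V_i(A_i)/V_i(J)$ rather than absolute value removes this difficulty. Everything else -- existence of the critical cut point, continuity of $f_i$, and preservation of contiguity -- follows directly from the assumptions on the density functions listed in the model section, so no additional machinery is needed.
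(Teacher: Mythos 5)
Your proposal is correct and follows essentially the same route as the paper: the paper's proof simply cites the Dubins--Spanier moving-knife protocol to obtain a proportional contiguous allocation, which guarantees each player value at least $1/n$. The only difference is that you prove the Dubins--Spanier proportionality guarantee from scratch (as an induction on $n$ with a renormalized inductive hypothesis) rather than citing it, and your self-contained argument is sound.
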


\begin{proof}
	The proof is constructive in nature. We know that a proportional allocation with continuous pieces always exists - as one can always be generated by the protocol of \citet{Dubins-Spanier} . A proportional allocation always guarantees a value of at least $\dfrac{1}{n}$ to each player and thus proves the lemma.
\end{proof}

\begin{lemma}
		$\forall i \in X(k) :\mbox{Optimal EW}(n,k,i) \ge \dfrac{1}{k}$
\end{lemma}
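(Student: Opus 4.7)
The plan is to exhibit an explicit contiguous allocation whose minimum per-player utility is at least $\frac{1}{k}$; this directly implies the lower bound claimed for the optimal egalitarian welfare, since the optimum is at least the egalitarian welfare of any particular contiguous allocation.

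The starting observation is structural. For any instance $i \in X(k)$, every player $p$ owns exactly $k$ disjoint rectangular pieces, the valuation $V_p$ is supported only on those pieces, and by the normalization property the values of these $k$ pieces sum to $1$. The first step I would take is to apply the pigeonhole principle to conclude that, for each player $p$, at least one of their owned pieces, call it $b_p$, satisfies $V_p(b_p) \geq \frac{1}{k}$.

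The second step is to check that the collection $\{b_1, b_2, \ldots, b_n\}$ is itself a valid contiguous allocation. By construction each $b_p$ is a single rectangular interval of the cake, so it is a continuous portion. Furthermore, by the definition of $X(k)$, pieces owned by different players are disjoint intervals of the cake, so the chosen representatives $b_1, \ldots, b_n$ are pairwise disjoint. Assigning $b_p$ to player $p$ therefore yields a member of $Allocations(1)(i)$ in which every player has utility at least $\frac{1}{k}$, so the egalitarian welfare of this allocation, and hence the optimal egalitarian welfare over $Allocations(1)(i)$, is at least $\frac{1}{k}$.

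There is really no substantive obstacle to this argument; the hardest part is simply recognizing that the restriction to $X(k)$ sidesteps the geometric tension that forced Lemma~3 to invoke a proportional protocol like Dubins--Spanier. Because the underlying cake already decomposes into disjoint intervals with known ownership, no actual cutting is needed to realize the allocation, which is precisely why this bound can be stronger than the generic $\frac{1}{n}$ bound whenever $k \leq n$.
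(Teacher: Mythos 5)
Your proof is correct and takes essentially the same approach as the paper's: the pigeonhole principle gives each player an owned piece of value at least $\frac{1}{k}$, and these pieces are pairwise disjoint intervals, so assigning each player their most valuable owned piece yields a contiguous allocation with egalitarian welfare at least $\frac{1}{k}$. The only cosmetic difference is that the paper also distributes the leftover cake to the players holding the adjacent ``most valuable pieces'' so that the entire cake is allocated while each portion remains a single interval; this extension can only increase utilities and does not affect the bound.
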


\begin{proof}
	Again, we offer a constructive proof, consider an allocation in which each player gets (among other things) the most valuable piece that belongs to him. The rest of the cake is divided arbitrarily between the players who hold the ``most valuable pieces'' enclosing each unallocated section of the cake. Clearly, each player has utility at least $\dfrac{1}{k}$ because by the pigeon-hole principle, the value of the most valuable piece is at least $\dfrac{1}{k}$. Thus, our result is proved.
\end{proof}

Combining these two results, the real lower bound i.e., the larger one, is $\dfrac{1}{k}$ when k<n and $\dfrac{1}{n}$ for $k \le n$. i.e.:

\[ \min_{\forall i \in X(k)} \max_{a\in Allocations(1)(i)}\mbox{Egalitarian Welfare }(a) \ge  \begin{cases}
 \dfrac{1}{k} & k<n \\
 \dfrac{1}{n} & k \ge n
\end{cases} \]

Note that the optimum egalitarian welfare with $k$ pieces being given to each player is $1$ because we are restricted to allocations in $X(k)$.

\begin{theorem}
	\[POIE(n,k) \begin{cases}
	\ge k & k <n  \\
	= n & k \ge n
	\end{cases} \]
\end{theorem}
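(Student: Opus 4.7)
The plan is to establish matching upper and lower bounds, handling both regimes of $k$ with a single family of constructions. For the upper bound $POIE(n,k) \le n$, I would observe that the numerator of the ratio is at most $1$ (since every player's value for a subset of the cake is at most $V_i([0,1]) = 1$), while the denominator is at least $1/n$ by Lemma 6, which applies to all cake cutting instances. This immediately yields the matching upper bound in the $k \ge n$ regime.

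For the lower bound, I would construct, for each integer $m \le n$, an instance $I_m \in X(m)$ whose optimal $1$-contiguous egalitarian welfare is exactly $1/m$. Specifically, divide $[0,1]$ into $nm$ equal-length basic pieces and let basic piece $j$ belong to player $((j-1) \bmod n) + 1$, worth $1/m$ to its owner and $0$ to all others. Each player owns exactly $m$ pieces, so $I_m \in X(m)$, and since these are the only pieces any player values, assigning each player all of their own pieces produces a valid $k$-contiguous allocation (for every $k \ge m$) with egalitarian welfare $1$. Taking $m = k$ when $k \le n$ would yield $POIE(n,k) \ge k$, while taking $m = n$ when $k \ge n$ would yield $POIE(n,k) \ge n$, matching the upper bound.

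The main remaining step is to verify that the optimal $1$-contiguous egalitarian welfare for $I_m$ is at most $1/m$; the reverse inequality $\ge 1/m$ is immediate from Lemma 7. I would first note that the argument of Lemma 1 adapts to egalitarian welfare verbatim: relocating a cut from the interior of player $P$'s basic piece to one of its endpoints either strictly helps $P$ (when $P$ receives one of the two adjacent intervals) or changes nobody's value (otherwise), so egalitarian welfare cannot decrease. Restricting attention to boundary-aligned cuts, any $1$-contiguous allocation partitions the $nm$ basic pieces into $n$ consecutive blocks whose sizes sum to $nm$, so at least one block contains $\ell \le m$ basic pieces. Because $\ell \le m \le n$ and the ownership pattern has period $n$, those $\ell$ consecutive pieces belong to $\ell$ distinct players; hence the player assigned to that block owns at most one piece inside it and receives value at most $1/m$. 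The proof is fairly direct given Lemmas 6 and 7; the only genuinely new ingredients are the egalitarian analog of Lemma 1 and the choice of the periodic round-robin ownership pattern — its period of $n$ is precisely what prevents any single player from collecting multiple owned pieces within a short interval, which is what makes the bound $1/m$ tight.
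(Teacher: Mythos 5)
Your proposal is correct and follows essentially the same route as the paper: the periodic round-robin instance with $k$ (or $n$) identical blocks of $n$ pieces, each worth $1/k$ to its owner, is exactly the paper's construction, and your use of the proportional-allocation bound for the upper bound together with a pigeonhole count showing the contiguous optimum is $1/k$ mirrors the paper's argument. Your only departures are cosmetic: you locate one block containing at most $k$ basic pieces rather than arguing that every player would otherwise need $n+1$ pieces (total $n(n+1) > nk$), and you explicitly supply the egalitarian analogue of Lemma 1 to justify restricting to boundary cuts, a step the paper leaves implicit.
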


\begin{proof}
	The proof is constructive in nature. We show an instance for $ k \le n$ which has the smallest possible optimum egalitarian welfare possibles. The instance has k identical blocks of n pieces. In each block, each piece belongs to a distinct player and all the blocks have the pieces arranged in the same order (of the players to which they belong) . Each of the $n$ pieces have value $\dfrac{1}{k}$. We claim that the optimum egalitarian welfare of this arrangement is $\frac{1}{k}$ - trivially achieved by giving each player the piece that belongs to him from the first block and giving the rest of the cake to the player who owns the adjoining piece. Consider the possibility that optimum egalitarian welfare is greater than $\dfrac{1}{k}$. Each player would then strictly need to get more than one piece that belongs to him as each piece has value exactly $\dfrac{1}{k}$. However, two pieces that belong to the same player are always separated by $n-1$ other pieces- hence each player must get n+1 pieces. However this is not possible for $k \le n$ because the total number of pieces allot ed must be $n(n+1)$ which is strictly greater than the $n.k$ pieces available. Thus, egalitarian welfare cannot be larger than the $\dfrac{1}{k}$ already achieved. This takes care of $k<n$ and $k=n$. Consider the instance for $k=n$, we have constructed an instance with optimum egalitarian welfare $\dfrac{1}{n}$- this is the smallest possible optimum egalitarian welfare $k>n$ as per the bounds we have proved. Thus, this instance can be directly used for all $k>n$ as well to prove the theorem.
\end{proof}

\subsection{POIU with Fairness}

We analyze instances in $X(k)$ to obtain some strong lower bounds for $POIU_{f}$. Note that the numerator of $\tau_{f}$ - optimum envy-free utilitarian welfare in Allocations(k)- is always $n$ for such instances - as every player can get all the pieces owned by him (which are the only pieces he values).

\subsubsection{The Two Player Case}

We show a construction in Fig 4 with two players with two pieces that gives us close to the maximum possible gain of $2$. We know the gain cannot be better than $2$ because every envy free allocation is also proportional when it covers the entire cake. Hence, every player has utility at least $\dfrac{1}{n}$ and thus the optimum envy free social welfare with connected pieces is at least $1$.

\begin{figure}[h]
	\centering
	\includegraphics{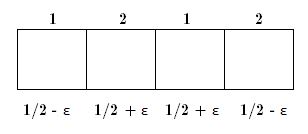}
	\caption{}
	\label{fig:2-player-fair}
\end{figure}

 The optimal envy free allocation comes with a cut down the exact centre of the cake (the boundary separating the left two pieces from the right two pieces). It is clear that no player can gain any more without causing a greater loss to the other player while maintaining envy-freeness. Thus, $\max_{a\in Allocations_f(1)}\mbox{Utilitarian Welfare }(a) = 1 + 2\epsilon$. Thus the gain and hence the $POIU_{f}(2,k) \rightarrow 2\ as\ \epsilon \rightarrow 0$. We generalize this result to $k$ pieces because the same construction can be used for the $k$ pieces as well because we have already hit the maximum possible gain.
\subsubsection{Some Lower Bounds}

We state some useful observations in the form of lemmas before proving our results.

\begin{lemma}
	The utilitarian welfare of any envy free allocation is at least 1.
	\[ \implies \max_{a\in Allocations_f(1)}\mbox{Utilitarian Welfare }(a)\ge 1 \]
\end{lemma}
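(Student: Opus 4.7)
The plan is to reduce the claim to the standard fact that envy-freeness implies proportionality whenever the entire cake is distributed, and then sum the proportional lower bounds over all players. Concretely, fix an envy-free allocation $A=\{A_1,\dots,A_n\}$ of the whole cake. By the defining inequality of envy-freeness, $V_i(A_i)\ge V_i(A_j)$ for every $j\in N$. Summing these $n$ inequalities (the case $j=i$ is trivial) yields $n\,V_i(A_i)\ge \sum_{j\in N} V_i(A_j)$.

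Next, I would invoke the additivity of $V_i$ on the disjoint portions $A_j$, together with the assumption that the $A_j$ partition $[0,1]$, to rewrite the right-hand side as $V_i\bigl(\bigcup_{j\in N} A_j\bigr)=V_i([0,1])$, which equals $1$ by the normalization property. Dividing by $n$ gives the per-player proportional bound $V_i(A_i)\ge 1/n$. Finally, summing this bound over $i\in N$ delivers the utilitarian welfare estimate $\sum_{i\in N} V_i(A_i)\ge n\cdot(1/n)=1$, which is exactly the claim. The stated corollary for $\mathrm{Allocations}_f(1)$ then follows immediately, since the set of envy-free contiguous allocations is nonempty (an envy-free contiguous allocation is known to exist under the standing assumptions on the valuations).

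There is no serious obstacle here; the argument is essentially a two-line chain from envy-freeness through additivity and normalization. The only point worth flagging explicitly is the implicit convention that an ``allocation'' covers the entire cake, so that $\bigcup_j A_j=[0,1]$ and the total value to each player is $1$. Without this convention, one could envy-freely distribute nothing to everyone and trivially violate the lemma; with it, the proof goes through exactly as above.
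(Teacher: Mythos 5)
Your proof is correct and follows the same route as the paper: envy-freeness of an allocation of the entire cake implies proportionality, so each player gets at least $1/n$, and summing over the $n$ players gives utilitarian welfare at least $1$. The paper states this in one line; you simply fill in the standard derivation of proportionality from envy-freeness via additivity and normalization.
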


\begin{proof}
Every EF allocation of the entire cake is also proportional, hence each player has utility at least $\dfrac{1}{n}$
\end{proof}

The implication is that $POIU_{f}(n,k) \le n$.

\begin{lemma}
	No optimal envy free allocation gives any player only a part of a piece ``owned'' by him.
\end{lemma}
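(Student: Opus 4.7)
The plan is to argue by contradiction. Suppose, toward a contradiction, that an optimal envy-free $k$-allocation $A$ gives some player $i$ only a proper portion $P_i := A_i \cap P$ of one of their owned pieces $P$, so that both $P_i$ and $P \setminus P_i$ have positive measure. I will construct a strictly better admissible allocation $A'$ by handing the remainder of $P$ over to $i$: set $A_i' := A_i \cup (P \setminus P_i)$ and $A_j' := A_j \setminus P$ for every $j \neq i$. Since $i$ owns $P$ and hence has strictly positive density there, the utilitarian welfare grows by exactly $V_i(P \setminus P_i) > 0$, which supplies the contradiction once I verify that $A'$ is still an envy-free $k$-allocation.

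Envy-freeness of $A'$ is the easy half. The key simplification is that the instance lies in $X(k)$, so $V_j(P) = 0$ for every $j \neq i$. Hence $V_j(A_m') = V_j(A_m)$ for every $j \neq i$ and every index $m$, and all of $j$'s envy-free inequalities are inherited verbatim from $A$. For player $i$ itself, the modification only raises $V_i(A_i')$ and only lowers $V_i(A_m')$ for $m \neq i$, so $i$'s envy-free inequalities tighten rather than loosen.

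The main obstacle, and where I plan to spend the most care, is checking that $A'$ still respects the at-most-$k$-pieces cap. The key structural observation I intend to use is that disjointness of the allocation forbids any contiguous interval of $A_j$ from wholly containing $P$: since $P_i \subseteq A_i$ is a nonempty subset of $P$, such a containment would force $A_i$ and $A_j$ to overlap on $P_i$, a contradiction. Consequently, excising $A_j \cap P$ from $A_j$ can only delete intervals entirely inside $P$ or shrink boundary-crossing intervals at one endpoint; it can never split a single interval into two, so $|A_j'| \le |A_j| \le k$. For player $i$, every interval of $A_i$ that meets $P$ (and at least one does, because $P_i \neq \emptyset$) is absorbed into the single new interval $P$ of $A_i'$, while intervals of $A_i$ disjoint from $P$ are unaffected, so $|A_i'| \le |A_i| \le k$. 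With envy-freeness, the piece-count bound, and the strict welfare gain all in hand, $A'$ contradicts the optimality of $A$, completing the argument.
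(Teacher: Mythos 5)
Your proposal is correct and follows essentially the same route as the paper's (one-line) proof: reassign the rest of the owned piece $P$ to its owner, note that welfare strictly increases while envy-freeness is preserved because no other player values $P$, and contradict optimality. You additionally verify the at-most-$k$-pieces constraint is not violated, a detail the paper omits but which your interval argument handles correctly.
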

\begin{proof}
	An allocation that gives the entire piece to the player (while the rest is unchanged) has greater utilitarian welfare and does not affect the envy-freeness property.
\end{proof}
\begin{lemma}
	Every player gets at least complete one piece owned by him in an envy-free Contiguous Allocation.
\end{lemma}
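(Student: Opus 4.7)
The plan is to combine the preceding Lemma 3.9 with the proportionality guarantee implied by envy-freeness, reading the statement as referring to an optimal envy-free contiguous allocation (which is the case that matters for bounding the denominator of $\tau_f$). Without the optimality qualification the statement is strictly false: a cut placed strictly inside a player's own piece can leave them with only a fractional share while the allocation remains envy-free, so any useful version of the lemma must rule out such wasteful cuts via Lemma 3.9.

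First, I would port Lemma 3.9 into the contiguous setting by a cut-sliding argument. Let $A_i=[a,b]$ and suppose one of the bounding cuts (say $b$) lies strictly inside some piece $i_l$ owned by player $i$, so the adjacent player's interval $A_{i'}$ contains the remainder of $i_l$. Slide $b$ outward to the right-hand boundary of $i_l$. Both $A_i$ and $A_{i'}$ remain single intervals, so contiguity is preserved. The transferred sliver is contained in $i_l$, and by the defining property of $X(k)$ we have $V_j(i_l)=0$ for every $j\neq i$; hence $V_j(A_k)$ is unchanged for all $j\neq i$ and all $k$, while $V_i(A_i)$ strictly increases. Consequently envy-freeness is preserved and utilitarian welfare strictly improves, contradicting optimality. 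Thus, in the optimal envy-free contiguous allocation, each piece $i_l$ is either entirely inside $A_i$ or entirely outside it.

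Second, an envy-free allocation of the whole cake is automatically proportional, so $V_i(A_i)\geq 1/n>0$. Since $V_i$ is supported only on the pieces owned by $i$, and by the previous step each such piece is either wholly inside $A_i$ or disjoint from it, the only way to achieve $V_i(A_i)>0$ is for $A_i$ to contain at least one complete $i_l$, which is the claim.

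The step requiring the most care is the cut-sliding argument: one must check that pushing a cut does not perturb envy relations between the two players involved or between them and the rest. The $X(k)$ construction resolves this cleanly---the transferred sliver is invisible (zero-valued) to every player other than the new owner, so no other valuation $V_j(A_k)$ moves, and the only net effect is that $V_i(A_i)$ grows, which can only reduce envy.
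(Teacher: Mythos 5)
Your proof matches the paper's own argument: it combines the proportionality guarantee of envy-freeness ($V_i(A_i)\ge 1/n>0$, so each player must receive some of his own cake) with the preceding lemma that owned pieces are never split off from their owner in an optimal allocation, concluding that $A_i$ must contain at least one whole owned piece. Your added observations---that the statement implicitly requires optimality (otherwise a player could envy-freely hold only a large fragment of one of his pieces) and that the cut-sliding step relies on the transferred sliver being zero-valued to everyone else in $X(k)$---are correct refinements of details the paper leaves implicit.
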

\begin{proof}
	Every EF allocation of the entire cake is also proportional, so each player has utility at least $\dfrac{1}{n}$. So every player has to get at least part of his own cake. From the previous lemma, we know he gets entire pieces only. Hence Proved.
\end{proof}
\begin{lemma}
	None of the $n.k$ pieces of cake are divided among more than two players in an optimal envy free Contiguous Allocation.
\end{lemma}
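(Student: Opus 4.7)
The plan is a proof by contradiction leveraging Lemmas 10 and 11 together with a simple geometric observation about contiguous allocations. Suppose, toward contradiction, that some piece $P$ (owned by some player $A$) has strictly positive-length overlap with the intervals of at least three players in an optimal envy-free Contiguous Allocation. Order those players from left to right along the cake as $B_1, B_2, \ldots, B_m$ with $m \ge 3$.

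First I would formalize the geometric observation: since each player receives a single interval and the allocated intervals are pairwise disjoint, any player $B_j$ with $1 < j < m$ must have their entire allocated interval contained inside $P$. If $B_j$'s interval extended past either endpoint of $P$, it would overlap with the interval of $B_{j-1}$ or $B_{j+1}$, violating disjointness of the allocation. In particular, the interior player $B_2$ is allocated a sub-interval lying wholly inside $P$.

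The second step is a case split on whether $B_2 = A$. If $B_2 = A$, then player $A$ receives only a strict sub-interval of their own piece $P$ (since $B_1$ and $B_3$ also take slices of $P$), directly contradicting Lemma 10, which forbids giving a player only part of a piece they own in an optimal envy-free allocation. If $B_2 \neq A$, then $B_2$'s entire allocated interval lies within $P$, a piece that only $A$ values; hence $B_2$ receives no complete piece they own, contradicting Lemma 11, which guarantees every player gets at least one complete piece they own in an envy-free Contiguous Allocation. In either case we reach a contradiction, so no piece can be divided among more than two players.

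I expect no real obstacle: the argument is essentially the observation that in a family of disjoint intervals on the real line, any interval sandwiched strictly between two others that all intersect a common interval $P$ must lie entirely inside $P$, combined with the two prior lemmas. The only minor care needed is in treating shared boundary points as measure-zero events, which the well-defined valuation property from the model handles automatically.
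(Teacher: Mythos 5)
Your proof is correct and follows essentially the same route as the paper's: the paper likewise derives the lemma from the two preceding ones, observing that the owner cannot receive only part of the piece and that a non-owner sandwiched in the middle would be confined to the piece, ending up with zero utility and no piece of their own. Your version merely makes the geometric containment of the middle player's interval and the case split on whether that player is the owner more explicit.
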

\begin{proof}
	This is a direct consequence of the previous two lemmas. We know the player to whom a piece belongs will never just get a part of a cake. Consider a division of the piece among three or more players, none of whom ``own'' that piece - Clearly at least one of them ends up with an utility of $0$ and no piece of his own - thus this is not possible.
\end{proof}

\begin{theorem}
	Optimal envy-free allocations in the $2$-pieces case ($k=2$) always have an utilitarian welfare of at least $\dfrac{n}{2}$.
\end{theorem}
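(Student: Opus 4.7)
The plan is to exhibit, for every instance in $X(2)$, one specific envy-free contiguous allocation whose utilitarian welfare is already at least $n/2$; since the quantity we want to bound is the maximum over all envy-free contiguous allocations, this suffices. The construction is driven by the observation that in $X(2)$, every player owns exactly two pieces summing to $1$, so each player has a piece of value at least $1/2$ (call it their \emph{heavy} piece), and if each player can be handed a contiguous interval containing their heavy piece, envy-freeness follows almost for free.

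Concretely, for each player $i$ let $H_i$ denote one of $i$'s two owned pieces with $v_i(H_i) \ge 1/2$ (ties broken arbitrarily) and let $L_i$ be the other, so $v_i(L_i) = 1 - v_i(H_i) \le 1/2$. Order the players as $\pi(1), \pi(2), \ldots, \pi(n)$ by the left-to-right position of their $H$-pieces on the cake; this is well defined because the $H_i$ are disjoint rectangles. Set $c_0 = 0$, $c_n = 1$, and for $j \in \{1,\ldots,n-1\}$ let $c_j$ be the right endpoint of the piece $H_{\pi(j)}$. Since the $H$-pieces are sorted, $c_j$ lies strictly to the left of the left endpoint of $H_{\pi(j+1)}$, so $A_{\pi(j)} := [c_{j-1}, c_j]$ is a non-degenerate interval containing all of $H_{\pi(j)}$. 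Because every cut lands on a piece boundary, no piece is split, so in particular each $L_i$ is contained wholly in some single interval $A_{\pi(l)}$.

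Envy-freeness is then immediate: for any $l \ne j$, the only player-$\pi(j)$ owned piece that can contribute to $V_{\pi(j)}(A_{\pi(l)})$ is $L_{\pi(j)}$ (since $H_{\pi(j)} \subseteq A_{\pi(j)}$), giving
\[ V_{\pi(j)}(A_{\pi(l)}) \le v(L_{\pi(j)}) = 1 - v(H_{\pi(j)}) \le v(H_{\pi(j)}) \le u_{\pi(j)}. \]
Summing utilities yields
\[ \sum_i u_i \;\ge\; \sum_i v(H_i) \;\ge\; n \cdot \tfrac{1}{2} \;=\; \tfrac{n}{2}, \]
and the optimum envy-free contiguous utilitarian welfare is at least as large. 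I do not anticipate a real obstacle; the only subtlety is the degenerate case $v(H_i) = v(L_i) = 1/2$, which is handled by the fact that envy-freeness is defined with weak inequality.
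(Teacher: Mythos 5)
Your proposal is correct and follows essentially the same route as the paper: both construct an explicit envy-free contiguous allocation in which each player receives an interval containing his higher-valued owned piece (worth at least $\tfrac{1}{2}$ by pigeonhole), with envy-freeness following because any other player's share can contain at most the lighter owned piece. Your version merely makes the cut positions and the envy comparison more explicit than the paper's sketch, which is a presentational rather than a substantive difference.
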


\begin{proof}
	Each player has two pieces that belong to him, we know from lemma 3.11 that each player gets at least one of these two pieces. Consider an allocation where each player gets (among other things) the higher value that belongs to him. The rest of the cake can be divided arbitrarily between the two players who hold the neighbouring high value pieces belonging to them. Clearly this allocation is envy free, because no player having his high value piece can envy someone having the other piece belonging to him. Since the higher value piece has value at least $\dfrac{1}{2}$, any optimal, continuous envy free allocation has value at least $\dfrac{n}{2}$.
\end{proof}

Now, we show a construction of a cake cutting instance T with $n$ pieces owned by each player (thus $T \in X(n)$) that has a gain approaching the maximum possible gain of $n$. This can be used for the $k$ pieces case where $k \ge n$ .

\begin{theorem}
$POIU_{f}(n,k) \rightarrow n$ when $ k \ge n$.
\end{theorem}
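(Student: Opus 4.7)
The plan is to exhibit, for each $n$, a family of cake-cutting instances $T_\epsilon \in X(n)$ such that $\tau_f(n,k,T_\epsilon) \to n$ as $\epsilon \to 0$ for every $k \ge n$. Combined with Lemma 3.9 (which yields the matching upper bound $POIU_f(n,k) \le n$), this settles the theorem. Since $T_\epsilon \in X(n) \subseteq X(k)$ whenever $k \ge n$, the numerator of $\tau_f$ is trivially $n$: allocating to each player all $n$ of their owned pieces is envy-free (every other player values them at $0$) and attains utilitarian welfare $n$. The real work is pushing the envy-free contiguous utilitarian welfare, i.e.\ the denominator of $\tau_f$, down to $1 + o(1)$.

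I would construct $T_\epsilon$ by a cyclic, player-symmetric arrangement that generalizes Figure 4. Partition the cake into $n$ equal-length ``super-blocks''. Inside each super-block place $n$ thin owned sub-intervals, one per player, in a cyclically shifted order, so that every player owns exactly one sub-interval in every super-block. Each of a player's $n$ owned sub-intervals contributes $1/n$ of their total value, with density zero outside its narrow support, and the densities are smoothed at scale $\epsilon$ to meet the continuity requirement of the model.

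The heart of the argument is then to bound the optimal envy-free contiguous utilitarian welfare by $1 + O(\epsilon)$. The key structural observation is that, in this cyclic construction, any contiguous interval covering exactly $m$ full super-blocks is valued at $m/n$ by \emph{every} player, because each player owns exactly one sub-interval per super-block. Envy-freeness among the $n$ players then forces each player's allocated interval to span the same number of super-blocks; summing these counts to $n$ gives exactly one super-block per player, hence per-player value $1/n$ and total welfare $1 + O(\epsilon)$. Therefore $\tau_f(n,k,T_\epsilon) \ge n/(1 + O(\epsilon)) \to n$ as $\epsilon \to 0$.

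The main obstacle will be rigorously handling envy-free allocations whose cut points do not lie on super-block boundaries. For these I would apply a local exchange argument: any interior cut can be shifted to the nearest super-block boundary without decreasing utilitarian welfare and without introducing envy, so the grid-aligned case captures the largest possible envy-free contiguous welfare. Combining this reduction with the symmetry-based bound above completes the argument, and the same instance $T_\epsilon$ serves for every $k \ge n$ because the numerator value $n$ is already attained using only $n$ pieces per player.
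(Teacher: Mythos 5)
Your high-level strategy matches the paper's: exhibit an instance in $X(n)$ whose optimal envy-free contiguous utilitarian welfare is $1+o(1)$, note that the numerator is $n$ for instances in $X(n)$, and combine with the proportionality upper bound $POIU_f(n,k)\le n$. However, your specific construction fails, and the reduction step you lean on is invalid.

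First, the perfectly symmetric instance (every owned sub-interval worth exactly $1/n$ to its owner) does not force the envy-free contiguous welfare down to $1+O(\epsilon)$, because envy-freeness is a weak inequality and exact ties let lopsided allocations through. Take $n=2$: your cyclically shifted arrangement is $1_a,\,2_a,\,2_b,\,1_b$ with every piece worth $1/2$ to its owner. Cut immediately after $1_a$, give $1_a$ to player $1$ and the rest to player $2$. Player $1$ receives value $1/2$ and values player $2$'s interval at exactly $1/2$, so he does not envy; player $2$ receives value $1$. This allocation is envy-free, contiguous, and has welfare $3/2$, so your denominator is at least $3/2$ and $\tau_f\le 4/3$, nowhere near $2$. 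The paper's instance avoids exactly this degeneracy by making the pieces asymmetric: each player has one piece of value $\frac{1}{n}+(n-1)\epsilon$ and $n-1$ pieces of value $\frac{1}{n}-\epsilon$, with the asymmetric pieces staggered across the blocks, so that any allocation handing one player substantially more than $1/n$ creates \emph{strict} envy somewhere else.

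Second, your ``local exchange'' step --- shifting any interior cut to the nearest super-block boundary ``without decreasing utilitarian welfare and without introducing envy'' --- is false on both counts, and the paper explicitly flags this obstruction (it is the reason no analogue of Lemma 3.1 holds for envy-free allocations: pieces sometimes must be split between players who do not own them precisely in order to preserve envy-freeness). In the $n=2$ example above, the welfare-$3/2$ allocation has its cut at a sub-interval boundary that is not a super-block boundary; moving it to the nearest super-block boundary collapses the welfare from $3/2$ to $1$. Hence the grid-aligned allocations do not capture the maximum envy-free contiguous welfare, and the symmetry argument that follows only bounds a strict subset of the allocations you need to control. To repair the proof you would need both the tie-breaking asymmetry in the construction and a direct argument (as in the paper) that any allocation giving some player two of his own pieces triggers a chain of strict welfare losses or envy.
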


\begin{proof}
Our instance consists of $n$ blocks of $n$ pieces , each block has pieces belonging to all the players. Each player has $n-1$ ``symmetric'' pieces of value $\dfrac{1}{n} - \epsilon$ and one ``asymmetric'' piece of value $\dfrac{1}{n} + (n-1)\epsilon$. Each block of $n$ pieces has exactly one asymmetric piece and that belongs to player number $n-i$ for block number $i$ (counting from left to right). Fig 5 is a depiction of this instance with the asymmetric pieces having stripes.

\begin{figure}[h]
\centering
\includegraphics[width=1\linewidth]{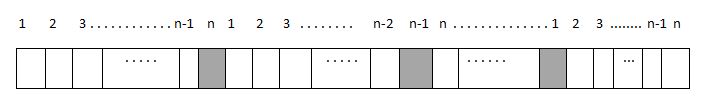}
\caption{}
\label{fig:nn-player-fair}
\end{figure}

We claim that the optimal envy-free allocation with continuous pieces is each player getting (among other things) exactly one piece that belongs to them with this being the asymmetric piece. All the symmetric pieces are divided arbitrarily between the players whose asymmetric pieces they are sandwiched between. This arrangement is clearly envy-free. To prove it is optimal, consider an allocation in which a player receives more than one piece belonging to him (among other things obviously)- one of them has to be a symmetric piece. Now, this symmetric piece comes at the cost of some player missing out on his asymmetric piece, which is a clear drop in optimality (trading an asymmetric piece for a symmetric piece for different players reduces social welfare). Additionally, to keep such an allocation envy free, this other player must receive at least two symmetric pieces again causing another player to lose his asymmetric piece in exchange creating a continuous chain of loss in optimality (we are not even sure such a chain ends in an envy free allocation). Hence, $\max_{a\in Allocations_f(1)}\mbox{Utilitarian Welfare }(a) = 1 + n(n-1)\epsilon$ which approaches $1$ as $\epsilon \rightarrow 0$. This instance can also be used to generate the $POIU$ for the $k$ pieces case where $k \ge n$ as we have already demonstrated the maximum possible gain. Thus,
\[ POIU_{f}(n,k) \rightarrow n\ as\ \epsilon \rightarrow 0\ when\ k \ge n \]

\end{proof}

Next up, we deal with the cases when $k < n$.

\begin{lemma}
 In an optimal EF allocation with continuous pieces (i.e an EF Contiguous Allocation), each player gets at most two partial pieces belonging to some other player.
\end{lemma}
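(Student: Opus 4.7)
The plan is to combine two structural facts already established in the excerpt. In a Contiguous Allocation, each player $i$ receives a single interval $A_i=[\ell_i,r_i]\subseteq[0,1]$. Among the $nk$ rectangular pieces of the instance (which partition the cake), the pieces that overlap $A_i$ fall into two types: those entirely contained in $[\ell_i,r_i]$, and those that straddle one of the two endpoints $\ell_i$ or $r_i$. Geometrically, at most two pieces — the one whose interior contains $\ell_i$ (if any) and the one whose interior contains $r_i$ (if any) — can be split between player $i$ and a neighbour. So purely from the contiguity of $A_i$, player $i$ receives at most two partial pieces, regardless of ownership.

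Next I would invoke Lemma 3.10, which says that in an optimal envy-free allocation no player is assigned only part of a piece that he himself owns. Applying this to the two candidate partial pieces at the endpoints $\ell_i$ and $r_i$ of $A_i$ immediately rules out the possibility that either of them is owned by player $i$; otherwise one could extend the cut to assign the full piece to $i$, strictly increasing $V_i(A_i)$ while leaving every other player's valuation unchanged (since, in an instance from $X(k)$, that piece has value $0$ for everyone except $i$), preserving envy-freeness and contradicting optimality. Therefore each of the at most two partial pieces in $A_i$ must belong to some other player.

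Combining these two observations gives exactly the statement: at most two partial pieces, and both belong to other players. There is essentially no main obstacle here — the result is a direct structural consequence of contiguity together with Lemma 3.10. The only minor subtlety to address in writing is the degenerate possibility that the same piece straddles both $\ell_i$ and $r_i$ (i.e.\ $A_i$ lies inside a single piece of the partition), but this only strengthens the bound from two to one and is in any case excluded in an optimal EF allocation by Lemma 3.11, since $A_i$ would then either violate proportionality or coincide with an entire piece owned by $i$.
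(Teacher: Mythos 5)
The paper states this lemma with no proof at all, so there is nothing to compare against; your argument is correct and supplies what is evidently the intended justification. Contiguity of $A_i=[\ell_i,r_i]$ means only the two pieces of the partition straddling $\ell_i$ and $r_i$ can be received partially, and the earlier lemma that no player in an optimal EF allocation receives only part of a piece he owns forces those at-most-two partial pieces to belong to other players. Your treatment of the degenerate case where $A_i$ sits inside a single piece is a reasonable extra precaution, though, as you note, the bound of two holds there regardless.
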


The implication is that each optimal EF continuous allocation has at most 2n divided pieces.

\begin{theorem}
	$\forall i \in X(k): \max_{a\in Allocations_f(1)(i)}\mbox{Utilitarian Welfare }(a)\ge \dfrac{n}{k+2}$ where $k < n$.
\end{theorem}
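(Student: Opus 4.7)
The plan is to analyze the optimal envy-free contiguous allocation $A = (A_1, \ldots, A_n)$ directly and lower bound each player's utility $U_l := V_l(A_l)$ by combining envy-freeness with a combinatorial count of how many allocations interact with each player's pieces. The proof will not need Lemma 3.13 in the form stated but follows the same spirit of controlling the structure of partial pieces.

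First, fix a player $l$. Since $\sum_j V_l(A_j) = V_l([0,1]) = 1$ and envy-freeness gives $V_l(A_j) \le U_l$ for every $j$, letting $T_l$ denote the number of indices $j$ for which $V_l(A_j) > 0$ yields $1 \le T_l \cdot U_l$, whence $U_l \ge 1/T_l$.

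Second, I would bound $T_l$ combinatorially. Player $l$'s $k$ pieces form $k$ disjoint intervals on the cake. Let $I_l$ denote the number of the $n-1$ total cuts that lie strictly inside one of these pieces; each such cut splits a piece of $l$ into two sub-intervals assigned to consecutive allocations. Summing, player $l$'s pieces are partitioned into exactly $k + I_l$ sub-intervals, and since each sub-interval belongs to exactly one allocation, the number of distinct allocations touching $l$'s pieces satisfies $T_l \le k + I_l$, hence $U_l \ge 1/(k + I_l)$. Since the $nk$ pieces are disjoint, each cut lies strictly inside at most one piece, so $\sum_l I_l \le n - 1$.

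Third, I would apply the Cauchy--Schwarz inequality (or equivalently the convexity of $x \mapsto 1/(k+x)$) to conclude
\[
\sum_l U_l \ge \sum_l \frac{1}{k + I_l} \ge \frac{n^2}{\sum_l (k + I_l)} = \frac{n^2}{nk + \sum_l I_l} \ge \frac{n^2}{nk + n - 1}.
\]
A cross-multiplication shows $\tfrac{n^2}{nk+n-1} \ge \tfrac{n}{k+2}$ iff $n(k+2) \ge nk+n-1$, i.e., $n \ge -1$, which is trivially true. The stated bound follows.

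The main obstacle is justifying $T_l \le k + I_l$ cleanly: the subtle case is when a single allocation spans multiple pieces of $l$ (with pieces of other players interleaved in its middle), but then sub-intervals from different pieces of $l$ only collapse into the same allocation's count, so the inequality continues to hold (often strictly). Once this bound is in hand, the remaining inequality manipulation is routine.
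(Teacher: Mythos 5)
Your proof is correct, and it shares the paper's overall skeleton---lower-bound each player's utility by the reciprocal of the number of players who receive positive value from his owned pieces (via envy-freeness and $\sum_j V_l(A_j)=1$), then apply AM--HM/Cauchy--Schwarz---but the combinatorial counting step is genuinely different. The paper bounds the number of ``competitors'' for player $i$'s value by $(k-1)+s_i$ and controls $\sum_i s_i \le 2n$ by invoking Lemma 3.13 (each player receives at most two partial pieces owned by others), a lemma the paper states without proof. You instead count the $n-1$ cuts of the contiguous allocation directly: each cut is interior to at most one of the $nk$ disjoint owned pieces, so $\sum_l I_l \le n-1$ and $T_l \le k+I_l$. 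This is self-contained (it sidesteps the unproven Lemma 3.13 entirely), and it yields the strictly stronger bound $\frac{n^2}{nk+n-1} \ge \frac{n}{k+1} > \frac{n}{k+2}$, whereas the paper's $\sum_i s_i \le 2n$ only gives $\frac{n^2}{nk+2n} = \frac{n}{k+2}$. The one point worth making explicit in a final write-up is the justification of $T_l \le k+I_l$ that you flag as the ``main obstacle'': since $V_l$ is supported only on $l$'s pieces, every $A_j$ with $V_l(A_j)>0$ must contain at least one positive-value sub-interval of the partition of $l$'s pieces induced by the cuts, distinct $A_j$'s contain disjoint sets of such sub-intervals, and there are exactly $k+I_l$ sub-intervals in total---so an allocation spanning several of $l$'s pieces only makes the inequality slacker, as you note. (Both your argument and the paper's also tacitly use the existence of an envy-free contiguous allocation of the whole cake, which the paper establishes for $X(k)$ via its game-theoretic construction.)
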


\begin{proof}
	Each player $i$ gets at least one piece that belongs to him. Let his utility be $u_{i}$. We know that the rest of the pieces that belong to him can be at most $k-1$ in number, let these pieces be divided among $(k-1)+s_{i}$ players. Assuming that as many players as possible divide these pieces to minimize the minimum utility of each player (may not always be possible, but we are only establishing a lower bound), hence we set $0 \le s_{i} \le k-1$ - the upper bound is because each of these pieces can be divided among at most two players. Clearly,
	\[\sum_{i=1}^{n} s_{i} \le 2n \]
	
	Due to the envy-freeness condition:
	\[ \forall i in {1,2,...,n} : u_{i} \ge \dfrac{1-u_{i}}{k-1 + s_{i}} \implies u_{i} \ge \dfrac{1}{k+s_{i}} \]
	\[\implies Optimum\ Social\ Welfare_{1-piece} = \sum_{i=1}^{n} u_{i} \ge \sum_{i=1}^{n} \dfrac{1}{k+s_{i}} \]
	
	We treat $k+s_{i}$ as a new variable and using the AM-HM inequality conclude that:
	\newline $\forall i \in X(k): \max_{a\in Allocations_f(1)(i)}\mbox{Utilitarian Welfare }(a)\ge \dfrac{n}{k+2}$ where $k < n$
	
\end{proof}

\subsubsection{The Maximum Value Piece Idea}
However, we can prove a stronger real lower bound of $\dfrac{n}{k}$ for $\forall i \in X(k): \max_{a\in Allocations_f(1)(i)}\mbox{Utilitarian Welfare }(a)$. Theorem 3.11 provides the intuition for the idea that - for an optimal EF Contiguous Allocation, each player gets at least the value of the most valuable piece owned by him. If there were an equivalent lemma to lemma 3.1 (proving that cuts need to made only on the boundaries between the $n.k$ ``owned'' pieces) for optimal envy-free allocations for cake cutting instances in $X(k)$, then establishing this maximum value piece idea would be straightforward. A simple proof by contradiction is as follows -if any player P does not get the value of his most valuable piece (MVP), then the hypothetical lemma analogous to lemma 3.1 would ensure that some other player received a piece of cake containing P's MVP, thus causing P to envy this player. However, such a lemma clearly cannot be proved in a similar manner to lemma 3.1 . This is because, in certain envy-free allocations, some of the $n.k$ pieces may be divided between two players who do not own that piece to ensure envy-freeness (as the player who owns that piece might envy some other player who gets the entire piece), even if there is no consequence in terms of utilitarian welfare in doing so.
\newline

Consider the following protocol inspired by the Dubins-Spanier Protocol. A knife moves from left to right on the cake starting from the left extreme. Any player who values the cake to the left of the knife equal to the value of his most valuable piece can call and claim that piece. Note that we adhere to the idea of lemma 3.10 and let each player delay the call so that he gets the entirety of the piece that belongs to him at the right end of his piece (although this is not strictly necessary for what we are proving, this allows for greater utilitarian welfare. We claim that this protocol ensures that every player gets at least the value of his most valuable piece. The proof rests on the fact that there are n such most valuable pieces. In each round of the procedure,we claim that each player who has not received a portion of cake has his most valuable piece unallocated currently - otherwise, he would have already called for it and received that piece as no other player would call with the knife in the middle or end of that piece. Hence, after each of the first n-1 rounds, the remainder of the cake always has ``enough'' value for the active players thus proving our claim. However, we cannot guarantee that the allocation resulting from this protocol will be envy free, as it is always possible that a player who has already received a piece of cake might envy a later piece called for by another player.
\newline

\begin{lemma}
An envy free contiguous allocation exists such that the allocation gives each player at least the value of his most valuable piece (of the k pieces owned by him).
\end{lemma}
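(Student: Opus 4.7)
The plan is to construct the required allocation as a pure Nash equilibrium (PNE) of a constrained variant of the cake-cutting game from \citet{eqbm}, whose PNE coincide with envy-free contiguous allocations. The modification restricts each player's strategy space so that no equilibrium can leave any player below the value of his most valuable owned piece (MVP). Formally, for an instance in $X(k)$, let $v_i^{*}$ denote the value to player $i$ of his MVP, and give each player $i$ the strategy space $[v_i^{*},1]$. A strategy profile $(c_1,\ldots,c_n)$ is resolved by the sequential moving-knife protocol described just before the lemma: a knife sweeps rightward from $0$, and the first player $i$ whose accumulated left-value $V_i([0,x])$ reaches $c_i$ claims the interval $[0,x]$; the game recurses on the remaining cake with the remaining players, the last-remaining player receiving the residual.

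Next I would adapt the fixed-point/best-response argument of \citet{eqbm} to establish existence of a PNE in this restricted game. The strategy spaces are compact, and the lower-bound restriction $c_i \ge v_i^{*}$ is compatible with existence because the safe strategy $c_i = v_i^{*}$ is always feasible and, as already argued in the Dubins--Spanier-style protocol preceding the lemma, guarantees player $i$ at least $v_i^{*}$ whenever it is played. The MVP guarantee at any PNE then follows by a one-shot deviation: any player who in equilibrium received strictly less than $v_i^{*}$ could switch to $c_i = v_i^{*}$, thereby becoming (weakly) the first caller in round one and securing exactly $v_i^{*}$, a contradiction.

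Envy-freeness of the PNE allocation transfers from the argument in \citet{eqbm}: if player $i$ envied player $j$ at a PNE, then $V_i(A_j) > V_i(A_i)$, so $i$ could deviate to a call value just below $V_i(A_j)$ to pre-empt $j$ and strictly improve her utility. The main obstacle is verifying that this envy-freeness deviation is never blocked by the lower-bound constraint $c_i \ge v_i^{*}$, and this is precisely where the MVP guarantee pays off: since every PNE utility is at least $v_i^{*}$, any envy target satisfies $V_i(A_j) > V_i(A_i) \ge v_i^{*}$, so the deviation's call value lies automatically within $[v_i^{*},1]$ and is feasible. Combining the MVP guarantee with envy-freeness at the PNE yields the allocation required by the lemma.
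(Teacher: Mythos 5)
The main gap is in the sentence ``Envy-freeness of the PNE allocation transfers from the argument in \citet{eqbm}.'' It does not transfer, and bridging that failure is most of the technical content of the paper's proof. The result of \citet{eqbm} assumes strictly positive value densities, and its envy-freeness argument splits into two cases: when the envied player $j$ receives $A_j$ in an \emph{earlier} round than $i$, player $i$ simply mimics $j$; but when $r_j > r_i$, the interval $A_j$ is not a prefix of the remaining cake at round $r_i$, so there is no single call value ``just below $V_i(A_j)$'' that pre-empts $j$ --- instead $i$ must stay silent through rounds $r_i,\dots,r_j-1$ and rely on a ``substitute caller'' making the same cuts in each of those rounds, which in \citet{eqbm} is guaranteed by the property that two players call simultaneously in every non-final round of a pure NE. For instances in $X(k)$ every player has zero density over most of the cake, that simultaneity property fails, and your one-line pre-emption deviation only covers the easy case $r_j < r_i$. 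The paper's proof constructs a modified game $G'$ --- cuts forced onto the boundaries of the $n\cdot k$ owned pieces, and calls delayed so that the would-be simultaneous caller is separated only by a piece worthless to everyone else --- precisely to restore a usable substitute-caller property, and then re-proves that pure NE of $G'$ induce envy-free allocations. Your proposal is missing this step entirely.

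Two secondary points. First, truncating the strategy space to $[v_i^{*},1]$ is a departure from the paper, which keeps the full game and obtains the MVP guarantee purely as an equilibrium property via the safe deviation $c_i = v_i^{*}$ (exactly the one-shot argument you also give); the truncation buys you nothing but creates a new obligation, since equilibrium existence in \citet{eqbm} is sensitive to the tie-breaking permutation and does not follow from compactness of the strategy sets alone. Second, your check that the envy deviation is never blocked by the constraint $c_i \ge v_i^{*}$ is correct but becomes moot once the truncation is dropped; the real feasibility issue is the one above, namely whether the deviation reproduces the intermediate cuts at all.
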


Note that this lemma does not talk about optimal envy free contiguous allocations. However, it helps in bounding the quantity - $\forall i \in X(k): max_{a\in Allocations_f(1)(i)}\mbox{Utilitarian Welfare }(a)$, and also helps with a later result regarding $POIE_{f}$.

\begin{proof}
The proof is constructive in nature, and comes up with a protocol that generates an envy free allocation with the desired properties.
\newline

We look to the work of \citet{eqbm} to prove the existence of an envy free allocation with the maximum value piece property. This work is done in the setting where every player knows all about the other player's preferences. Firstly, we convert this protocol to the Dubins-Spanier protocol based game in \citet{eqbm} with threshold strategies. We then adapt the results of the paper for our particular class of cake cutting instances by modifying the game. These modifications are required because \citet{eqbm} restrict the value density functions to be strictly positive which rules out cake cutting instances such as those in the set X where the players are allowed to have zero value over parts of the cake. In contrast, we work with non-negative value density functions which allow players to have zero value over pieces of cake.\newline To summarize the protocol of \citet{eqbm}- the protocol is in the form of a n-round game G where each player's strategy is in the form of $n$ ``thresholds''. Player $i$ had strategy $S_{i} = (t_{i}^{1},t_{i}^{2},....t_{i}^{n})$, meaning that player $i$ will call whenever the value of the cake to left of the knife is $t_{i}^{r}$ in round $r$ (and receives the cake to the left of the knife) if he has not already received a piece. \newline
We make some modifications to this protocol to create a new game G' for players to divide cake cutting instances in the set X. These modifications ensure that the game G' only induces cuts on the boundaries of the n.m ``owned'' pieces. \newline
TThe main result of \citet{eqbm} is that a pure Nash equilibrium in their game (which always exists for some deterministic tie breaking rule) results in an allocation which is envy free. We will show in the next lemma that the results of \citet{eqbm} hold for G' as well despite the modifications we make to the protocol. We have already shown a simple strategy that any player can use to get at least the value V' of his most valuable piece regardless of the actions of the other players in this game. In each round, this player P only calls if the value of the cake to the left of the knife is equal to V' - this immediately guarantees that this player will get V' as no other player will  cut in the middle of player P's most valuable piece. Thus, the Nash equilibrium that induces an envy free allocation must assure each player of at least the value of his most valuable piece or have its Nash equilibrium property violated by the existence of a more profitable strategy to deviate to. This proves the maximum value piece lemma.
\end{proof}

We provide a brief explanation of the proof of the primary result of \citet{eqbm}. The protocol of \cite{eqbm} has an additional feature - a static tie-breaking rule, a permutation $\pi$ of players - {1,2,3..,n}, to choose one among multiple players who may call at the same time. However, for cake cutting instances in X with our modified protocol, which results in a subset of equilibria of the original game, there is no need for a tie-breaking rule because players do not call simultaneously, and hence there will always be a pure Nash equilibrium (the existence of a pure Nash equilibrium in the more general setting depends upon the tie breaking permutation, always being possible for certain permutations). We will list some observations about pure Nash equilibria (NE) in the original game of \citet{eqbm}

\begin{description}
    \item[Property 1:] In every pure NE, the entire cake is always allocated.
    \item[Property 2:] In every round except the last round of a game where the players have strategies corresponding to a pure NE, two players call simultaneously.
\end{description}

\begin{lemma}[\citet{eqbm}]
Every pure NE in the game G induces an envy free contiguous allocation of cake.
\end{lemma}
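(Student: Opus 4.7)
The plan is to assume toward contradiction that the allocation $A = (A_1, \ldots, A_n)$ induced by a pure NE of $G$ is not envy-free, so some player $i$ strictly envies some player $j$, meaning $V_i(A_j) > V_i(A_i)$. Let $r_l$ denote the round in which player $l$ receives their piece and let $0 = c_0 < c_1 < \ldots < c_n = 1$ denote the cut points, so the piece handed out in round $r$ is $[c_{r-1}, c_r]$. By Property 1, every $r_l$ is well defined, and $r_i \ne r_j$ since only one piece is allocated per round. The strategy is to exhibit a profitable unilateral deviation for player $i$ in both sub-cases $r_j < r_i$ and $r_j > r_i$, contradicting the NE property.

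For the easier sub-case $r_j < r_i$, player $i$ is still active when round $r_j$ begins. I would consider the deviation that leaves every threshold of $i$ unchanged except $t_i^{r_j}$, which is lowered to $V_i(A_j) - \epsilon$ for small $\epsilon > 0$. Since $i$ did not call in any round prior to $r_i$ in the original play, the deviated play proceeds identically through rounds $1, \ldots, r_j - 1$. In round $r_j$, player $i$'s threshold is reached at some $c^* < c_{r_j}$ strictly before $j$'s threshold, so $i$ calls first and secures value $V_i(A_j) - \epsilon$, which exceeds $V_i(A_i)$ for small enough $\epsilon$.

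For the harder sub-case $r_j > r_i$, Property 2 is essential. By Property 2, in round $r_i$ there is a second player $k_1$ whose threshold is also reached at exactly $c_{r_i}$, with $i$ winning the tie and taking $A_i$. I would consider the deviation in which $i$ raises each threshold $t_i^r$ for $r_i \le r \le r_j - 1$ above the corresponding original cut value and lowers $t_i^{r_j}$ to $V_i(A_j) - \epsilon$. The key invariant to prove inductively is that in the deviated play the knife arrives at $c_r$ at the end of round $r$ for every $r_i \le r \le r_j - 1$, and moreover the deviated active set at the start of each round differs from the original by swapping $i$ for a single shadow player $p_r$. The base case ($r = r_i$) is immediate since $k_1$ alone reaches threshold at $c_{r_i}$ with $i$ silent, so $p_{r_i + 1} = k_1$. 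The inductive step leverages Property 2 in each round $r$: the original winner $w_r$ and tied partner $k_r$ are distinct, so at least one of them is not the shadow $p_r$ and hence remains in the deviated active set with threshold reached at exactly $c_r$; updating $p_{r+1}$ to either $p_r$ or $k_r$ depending on which of $w_r, k_r$ is the deviated recipient carries the invariant forward. Once this is established, in the deviated round $r_j$ player $i$'s lowered threshold is the first to be reached (at some $c^* < c_{r_j}$, since no originally-active player had threshold reached before $c_{r_j}$), so $i$ receives value $V_i(A_j) - \epsilon > V_i(A_i)$.

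The main obstacle I anticipate is the inductive swap-invariant in Case 2: careful book-keeping is needed to show that the deviated active set never diverges from the original by more than a single player, and that the deterministic tie-breaking rule never sabotages the argument when both $w_r$ and $k_r$ remain in the deviated set. Property 2 is precisely what prevents divergence, because it guarantees that whenever the deviated game loses the original winner to an earlier round, a second player is already waiting at the same threshold to take their place and preserve the original cut point. Assembling the two cases yields the profitable deviation in either possibility, contradicting the NE assumption and establishing envy-freeness.
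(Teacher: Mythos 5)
Your proposal follows essentially the same route as the paper's proof: a contradiction via a profitable unilateral deviation in which player $i$ mimics (or $\epsilon$-undercuts) player $j$'s call, with the easy case $r_j<r_i$ handled directly and the harder case $r_j>r_i$ handled by the ``substitute'' caller guaranteed by Property~2 --- your shadow-player swap invariant is precisely a rigorous rendering of that substitute-caller argument. The paper's version is only a sketch, and your write-up supplies the inductive book-keeping it omits, but the underlying idea is identical.
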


\begin{proof}
This is a proof by contradiction. Let A be an allocation induced by a pure NE. Let player $i$, who gets his piece $A_{i}$ in round $r_{i}$ envy player j, who gets his piece $A_{j}$ in round $r_{j}$. In the case where $r_{j}<r_{i}$, player $i$ can deviate profitably by mimicking player $j$ to get piece $A_{j}$ which he prefers (hence, proving that this strategy profile is not a NE). The case $r_{j}>r_{i}$ is a little more complicated. Player $i$ can still get $A_{j}$ by mimicking player $j$, this is made possible by the ``substitute'' caller at every round from $r_{i}$ to $r{j}-1$ who makes sure the same cuts are made.
\end{proof}

Now, onto pure NE in our game G' for cake cutting instances in X. Our problem with the game G is that players will delay their call until the next player is about to call (or exactly when they call in case of a favourable tie-breaking rule). This means that the cuts on the cake will not be only on the boundaries between the $n.k$ ``owned'' pieces. Therefore we artificially force conditions favourable to us. We enforce the following two rules:

\begin{enumerate}
    \item Cuts are only to be made at the boundaries between the $n.k$ ``owned'' pieces.
    \item The players must delay their call until the knife reaches the beginning of the owned piece P at the end which the next caller C will call (with piece P belonging to player C). Thus, the two "simultaneous callers" of the game G are separated by a piece owned by the ``substitute'' caller. This restriction enforces a particular class of equilibria that is favourable to us.
\end{enumerate}

\begin{lemma}
Every pure NE in the game G' induces an envy free contiguous allocation of cake.
\end{lemma}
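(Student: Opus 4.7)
The plan is to mirror the proof of Lemma 3.14 (the Branzei--Miltersen result for the original game $G$), checking at each step that the deviation strategies used there remain admissible under the two extra rules that define $G'$ (cuts only on boundaries of owned pieces, and a mandatory delay until the knife reaches the left end of the next caller's owned piece). First I would set up a proof by contradiction: assume a pure NE $(S_1,\ldots,S_n)$ of $G'$ induces an allocation $A$ in which player $i$ envies player $j$, so $V_i(A_j) > V_i(A_i)$, and let $r_i$, $r_j$ be the rounds in which $i$ and $j$ receive their pieces. Then I would split on whether $r_j < r_i$ or $r_j > r_i$, exactly as in \citet{eqbm}.

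In the case $r_j < r_i$, I would construct a deviation $S_i'$ that copies the first $r_j$ thresholds of $S_j$, so that player $i$ now calls at the end of $A_j$ in round $r_j$. The key check is that $S_i'$ respects the rules of $G'$: since $S_j$ itself already calls at owned-piece boundaries and obeys the delay rule, so does the verbatim copy; moreover the delay rule of $G'$ separates any two potential callers by an entire owned piece, which avoids the tie-breaking subtleties that \citet{eqbm} handle with a static permutation $\pi$. In the case $r_j > r_i$, I would again let $S_i'$ copy $S_j$ and argue that in each intermediate round $r_i, r_i+1, \ldots, r_j-1$ some other player acts as the substitute caller who forces the same cut as in the original equilibrium. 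Because the remaining $n-1$ strategies are held fixed, these substitute callers fire at exactly the same boundary points as before, which guarantees that the cut sequence up to round $r_j$ is identical and that player $i$ ends up with $A_j$.

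The main obstacle, which I would address explicitly, is justifying that the substitute-caller mechanism still works once we insist that cuts lie on the $nk+1$ owned-piece boundaries and allow value density functions to vanish on parts of the cake. Vanishing densities mean that players can be indifferent across long stretches of cake, which in $G$ could tempt a substitute caller to delay and disrupt the copied cut sequence; the boundary rule of $G'$ restricts admissible call points to a finite discrete set, and the delay rule pins down the exact point a caller must use within that set, so there is no such wiggle room. Once this check is in place, the deviation $S_i'$ in either case delivers a piece of value at least $V_i(A_j) > V_i(A_i)$ to player $i$, contradicting the assumption that $(S_1,\ldots,S_n)$ is a pure NE of $G'$, and thereby establishing that every pure NE of $G'$ induces an envy-free contiguous allocation.
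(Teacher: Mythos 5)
Your overall strategy (proof by contradiction, case split on $r_j<r_i$ versus $r_j>r_i$, deviation by mimicking player $j$) is the same as the paper's, and your handling of the $r_j<r_i$ case is fine. However, there is a genuine gap in your $r_j>r_i$ case: you assert that when player $i$ stops calling in round $r_i$, the substitute callers ``fire at exactly the same boundary points as before,'' so that the cut sequence up to round $r_j$ is identical and player $i$ receives the piece $A_j$ itself. This is false in $G'$. The whole point of the second rule defining $G'$ is that the would-be simultaneous caller of $G$ is separated from the actual caller by an entire piece that the substitute owns; so when player $i$ deviates and no longer calls in round $r_i$, the substitute calls one owned piece \emph{later}, and every subsequent cut is shifted to the right by some pattern of such pieces. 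The cut sequence is therefore \emph{not} preserved, and player $i$ need not end up with $A_j$.

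The paper's proof confronts exactly this issue and resolves it differently: it concedes that the cuts shift rightward, but observes that each shift consists only of pieces owned by the respective substitute callers, which by the definition of instances in $X$ have zero value to every other player --- in particular to player $i$. Hence the piece player $i$ obtains by mimicking player $j$, while possibly different from $A_j$ as a subset of the cake, still has value $V_i(A'_j)$ to him, which is all that is needed for the profitable-deviation contradiction. To repair your argument you should replace the claim of an identical cut sequence with this value-preservation argument (or otherwise prove that the shifted prefix differs from the original only by zero-value-to-$i$ portions). Your closing discussion of vanishing densities and the discreteness of admissible call points is a reasonable observation, but it does not address this central difficulty.
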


\begin{proof}
Although property 1 of pure NE in G holds for any pure NE A', property 2 clearly does not hold- because the valuations of only one player changes at any point of time when a knife moves over the cake. That leaves us with the troublesome case of $r_{j}>r_{i}$ when player i, who gets his piece $A'_{i}$ in round $r_{i}$ envy player j, who gets his piece $A'_{j}$ in round $r_{j}$. We claim that it is a profitable strategy for player i to mimic player j, and will result in player i receiving a piece of value  $V_{i}(A'_{j})$ even if player i does not receive the piece $A_{j}$ itself. Although two players do not call simultaneously like in the game G, the ``substitute'' caller calls one piece later, without affecting the dynamics of the other players because this piece is of zero value to anybody else, as far as they are concerned, that part of the cake does not exist for anybody but the ``substitute'' caller. Thus, although the actual cuts will be shifted to the right in some pattern, player i can still get a piece of value $V_{i}(A_{j})$ by mimicking player j.
\end{proof}

Note that G' is a specific instance in the class of games that can be defined by applying the rules of the game G to cake cutting instances in X. This is because G does not specify the course of action for a player when there are portions of cake that are of no value to the player who is about to call. This particular definition of G' is designed to simulate the ``Two players call simultaneously'' property of the game G while preserving the advantageous property of restricting cuts to the boundaries between ``owned'' pieces (which makes possible the alternative strategy which guarantees each player the value of his most valuable piece regardless of the actions of the other players).
\newline
To go with lemma 3.17, we come up with a simple construction that takes us to the lower bound of $\dfrac{n}{k}$ - this cake cutting instance T' in $X(k)$ has $k$ ``inner'' players and $n-k$ ``outer'' players. The inner player's pieces are arranged in a pattern similar to construction T in fig 5 - with $k$ blocks of $k$ pieces, one belonging to each player, arranged in ascending order of player number, with the $i^{th}$ block having the asymmetric (infinitesimally larger value than the other pieces owned by player i). The outer players just have k pieces of value $\dfrac{1}{k}$ each, and they are arranged in k blocks of (n-k) players (in no particular order). The blocks of the inner and outer players alternate (there are k blocks of each, thus this is possible). We claim that an optimal envy free allocation has each player receiving (among other pieces) only one piece that belongs to them, the asymmetric piece in the inner players' case and any piece in the outer players' case. The proof follows a similar line of the proof involving the instance T, namely that two pieces belonging to any player always have an asymmetric piece in between them. Thus
for this instance - $max_{a\in Allocations_f(1)}\mbox{Utilitarian Welfare }(a) \rightarrow \dfrac{n}{k}$. Thus, the gain approaches $k$. Hence, we have given a construction for the lower bound and obtained the desired result.

\begin{theorem}

\[ POIU_{f}(n,k) \begin{cases}
      \ge k & k\leq n \\
      = n & k > n
   \end{cases} \]

\end{theorem}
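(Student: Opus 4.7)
The plan is to handle the two cases separately, leveraging the tools already assembled in the excerpt.

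For the case $k > n$, both directions are essentially in hand. Theorem 3.13 constructs an instance whose gain $\tau_f$ approaches $n$ in the limit, which gives the lower bound $POIU_f(n,k) \ge n$. Combining this with the universal upper bound $POIU_f(n,k) \le n$ noted immediately after Lemma 3.9 yields equality. Since the instance of Theorem 3.13 lies in $X(n) \subseteq X(k)$ whenever $k \ge n$, no further construction is required here.

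For the case $k \le n$, the task reduces to exhibiting an instance $i \in X(k)$ with $\tau_f(n,k,i) \to k$. I would use the instance $T'$ described just before the theorem: $k$ \textbf{inner} players whose pieces replicate the layout of the instance $T$ from Theorem 3.13 (each inner player owning one asymmetric piece of value $\tfrac{1}{k} + (k-1)\epsilon$ and $k-1$ symmetric pieces of value $\tfrac{1}{k} - \epsilon$), together with $n-k$ \textbf{outer} players, each owning $k$ uniform pieces of value $\tfrac{1}{k}$, laid out so that the $k$ inner blocks alternate with $k$ outer blocks. Because $T' \in X(k)$, the Allocations$(k)$ numerator of $\tau_f(n,k,T')$ equals $n$, so it suffices to show that the max envy-free contiguous utilitarian welfare on $T'$ approaches $\tfrac{n}{k}$ as $\epsilon \to 0$, which immediately gives $\tau_f(n,k,T') \to k$.

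The lower bound on this denominator is easy: by Lemma 3.17 there is an envy-free contiguous allocation in which every player receives at least the value of his most valuable owned piece, and in $T'$ every such piece has value at least $\tfrac{1}{k}$, summing to $\tfrac{n}{k}$. The matching upper bound, namely that no envy-free contiguous allocation on $T'$ substantially exceeds $\tfrac{n}{k}$, is the main obstacle. I would adapt the cascade argument from the proof of Theorem 3.13: if any inner player receives two owned pieces, then one of them must be symmetric, which forces some other inner player to forfeit his asymmetric piece, yielding a strict drop in welfare as $\epsilon \to 0$; envy-freeness then propagates this trade into a chain of further forfeits. Outer players do not break the cascade because their pieces have uniform value, so any redistribution among them is welfare-neutral, and Lemma 3.14 bounds the number of partial pieces they can hold. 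The delicate step will be ruling out envy-free allocations that mix partial outer pieces with symmetric inner pieces to inflate welfare; here I would combine Lemma 3.12 (every player must get an entire owned piece) with Lemma 3.16 (no piece is split among more than two players) to reduce the analysis to finitely many combinatorial patterns and argue block by block using the interleaved structure.
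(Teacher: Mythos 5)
Your proposal follows the paper's own argument essentially verbatim: for $k > n$ you combine the instance $T$ of Theorem 3.13 with the universal upper bound $POIU_f(n,k) \le n$, and for $k \le n$ you use the same interleaved inner/outer instance $T'$ together with the cascade argument adapted from the proof for $T$. If anything, you are more explicit than the paper about the genuinely delicate step --- upper-bounding the optimal envy-free contiguous utilitarian welfare of $T'$ by roughly $\frac{n}{k}$ --- which the paper itself only sketches by asserting that the argument ``follows a similar line'' to that of $T$.
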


\subsection{POIE with Fairness}

 The instance T has the maximum possible value of $POIE_{f}$ of n (when $k \ge n$). Note that the maximum value piece lemma offers the following result regarding optimal egalitarian welfare in $X(k)$ when $k < n$. Also note that the optimum egalitarian welfare in Allocations$_f$(k)- is always $1$ for instances in $X(k)$ - as every player can get all the pieces owned by him (which are the only pieces he values).

\begin{lemma}
$\forall i \  \in X(k): \max_{a\in Allocations_f(1)(i)}\mbox{Egalitarian Welfare }(a) \ge \dfrac{1}{k}$
\end{lemma}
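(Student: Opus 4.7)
The plan is to combine the maximum value piece lemma (Lemma 3.17) with a simple pigeonhole observation on the values of the $k$ owned pieces of each player.

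First I would unpack what it means for a cake cutting instance $i$ to lie in $X(k)$: each player $p$ owns exactly $k$ rectangular pieces $p_1,\ldots,p_k$ which are the only pieces of cake that $p$ values at all, and by the normalization condition $\sum_{m=1}^{k} V_p(p_m) = V_p([0,1]) = 1$. By the pigeonhole principle, the most valuable owned piece of $p$, call it $\mathrm{MVP}(p)$, satisfies $V_p(\mathrm{MVP}(p)) \ge \tfrac{1}{k}$, and this holds uniformly for every player $p \in N$.

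Next I would invoke Lemma 3.17, which asserts the existence of an envy-free contiguous allocation $a^{\star}$ such that every player $p$ receives utility at least $V_p(\mathrm{MVP}(p))$. Combining the two observations, $a^{\star}$ is an envy-free contiguous allocation in which the minimum utility across players is at least $\tfrac{1}{k}$, so the egalitarian welfare of $a^{\star}$ is at least $\tfrac{1}{k}$. Since the optimal envy-free egalitarian welfare over all contiguous allocations is at least that of any single such allocation, we obtain
\[
\max_{a \in \mathrm{Allocations}_f(1)(i)} \mbox{Egalitarian Welfare}(a) \;\ge\; \mbox{Egalitarian Welfare}(a^{\star}) \;\ge\; \tfrac{1}{k},
\]
which is the desired bound.

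There is essentially no obstacle here beyond citing Lemma 3.17 correctly: the entire content of the argument is the pigeonhole step applied to the definition of $X(k)$, and the maximum value piece machinery (developed via the adapted Dubins--Spanier equilibrium game $G'$) has already done the hard existential work. I would explicitly note that this result is tight on instances in $X(k)$ where all $k$ owned pieces of each player have equal value $\tfrac{1}{k}$, matching the bound.
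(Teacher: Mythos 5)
Your proposal is correct and follows exactly the paper's argument: invoke the maximum value piece lemma to obtain an envy-free contiguous allocation giving each player at least the value of his most valuable owned piece, and observe by pigeonhole that this value is at least $\tfrac{1}{k}$ since the $k$ owned pieces sum to $1$. The paper's own proof is a one-line version of the same reasoning, so there is nothing to add.
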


\begin{proof}
Each player can always get at least the value of his MVP, which is $\dfrac{1}{k}$ as per the pigeon hole principle.
\end{proof}

Thus, the following result is obtained:
 \begin{theorem}
     \[ POIE_{f}(n,k) \begin{cases}
      \ge k & k\leq n \\
      = n & k > n
   \end{cases} \]

 \end{theorem}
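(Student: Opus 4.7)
My plan is to split on whether $k \ge n$ or $k \le n$, and in each regime reuse an instance already constructed for the analogous $POIU_f$ theorem, combining it with the preceding maximum-value-piece lemma (which gives $\max_a EW(a) \ge \tfrac{1}{k}$ on every instance in $X(k)$).

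For $k \ge n$, I would reuse the instance $T$ from the proof of the $POIU_f$ bound: $n$ blocks of $n$ pieces with each player owning $n-1$ symmetric pieces of value $\tfrac{1}{n}-\epsilon$ and one asymmetric piece of value $\tfrac{1}{n}+(n-1)\epsilon$, the asymmetric piece of player $n-i$ sitting in block $i$. Because $T \in X(n) \subseteq X(k)$, the numerator of the ratio equals $1$ (each player can receive all his owned pieces). The chain-of-loss argument already established for $POIU_f$ shows that in any optimal envy-free contiguous allocation each player receives exactly his unique asymmetric piece (together with shares of foreign pieces that are worthless to him), so every player's utility is precisely $\tfrac{1}{n}+(n-1)\epsilon$; consequently $EW \to \tfrac{1}{n}$ and the ratio approaches $n$. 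The matching upper bound $POIE_f \le n$ is uniform in $k$: the Dubins--Spanier proportional procedure provides, on any instance, a contiguous envy-free allocation of egalitarian welfare at least $\tfrac{1}{n}$, while normalization forces $EW \le 1$.

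For $k \le n$, I would reuse the instance $T'$ with $k$ inner and $n-k$ outer players alternating in $k$ blocks each, the outer players owning $k$ pieces of value $\tfrac{1}{k}$ and the inner players configured as in $T$ (with asymmetric pieces). Since $T' \in X(k)$ the numerator is again $1$. The preceding lemma gives denominator $\ge \tfrac{1}{k}$, so it suffices to show that this bound is essentially saturated: the asymmetric-versus-symmetric argument from $POIU_f$ forces every player to receive at most one owned piece in an optimal envy-free contiguous allocation, so each outer player has utility exactly $\tfrac{1}{k}$ (all other pieces being worthless to him), pulling $EW$ down to $\tfrac{1}{k}$ and the ratio up to $k$.

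The main obstacle is confirming that in $T'$ no envy-free contiguous allocation can hand an outer player a long portion covering several of his own pieces, which would push $EW$ strictly above $\tfrac{1}{k}$. I expect to dispatch this by lifting the chain-of-loss argument from $POIU_f$ almost verbatim: granting two owned pieces to one outer player forces some inner player to relinquish his asymmetric piece, which triggers a cascade of envy that cannot close up within the alternating inner/outer geometry of $T'$. Once that structural claim is in hand, the theorem drops out by combining the numerator value $1$ with the tight denominator estimates from Dubins--Spanier and the maximum-value-piece lemma.
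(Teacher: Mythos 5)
Your overall strategy matches the paper's: for $k>n$ reuse the instance $T$ (where the optimal envy-free contiguous allocation gives every player only his asymmetric piece, so the egalitarian welfare of the denominator tends to $\tfrac{1}{n}$ while the numerator is $1$), and for $k\le n$ combine the maximum-value-piece lemma with the instance $T'$. The paper's own write-up of this theorem is extremely terse --- it essentially just cites $T$ and the lemma $\max_{a}\mbox{EW}(a)\ge\tfrac{1}{k}$ and declares the result --- so your proposal is, if anything, more explicit about what actually has to be checked. However, two steps as you state them do not go through. First, the Dubins--Spanier procedure produces a \emph{proportional} contiguous allocation, not an envy-free one (for $n\ge 3$ proportionality does not imply envy-freeness), so it cannot be the source of your upper bound $POIE_f\le n$. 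The correct route, which the paper uses elsewhere (Lemma 3.9 and the game $G'$), is: an envy-free contiguous allocation of the whole cake exists, and any such allocation is proportional, hence the denominator is at least $\tfrac{1}{n}$; combined with the numerator being at most $1$ this gives $POIE_f\le n$.

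Second, for $k\le n$ you propose to lift the ``chain-of-loss'' argument from the $POIU_f$ analysis to show that no envy-free contiguous allocation of $T'$ has egalitarian welfare above $\tfrac{1}{k}$. That argument is about \emph{utilitarian optimality} (trading an asymmetric piece for a symmetric one lowers the sum of utilities); it says nothing about an envy-free allocation that sacrifices utilitarian welfare to raise the minimum utility, which is exactly what the egalitarian denominator maximizes over. The tool you actually need is the counting argument from the paper's $POIE$ theorem (without fairness): in $T'$ any two pieces owned by the same player are separated by at least $n-1$ other pieces, so if every player had utility strictly above $\tfrac{1}{k}$ plus the $\epsilon$-perturbations, every player would need a contiguous portion spanning at least $n+1$ pieces, requiring $n(n+1)>nk$ pieces in total --- a contradiction. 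This bounds the egalitarian welfare of \emph{every} contiguous allocation (envy-free or not) by roughly $\tfrac{1}{k}$, which together with the maximum-value-piece lemma pins the denominator at $\tfrac{1}{k}$ and yields the ratio $k$. (Indeed, with this counting argument the even simpler instance from the $POIE$ theorem, $k$ identical blocks of $n$ pieces each of value $\tfrac{1}{k}$, already suffices and gives the bound exactly rather than in the limit.) With these two repairs your proof closes.
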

\section{Discussion}

We have presented certain results regarding the price of indivisibility by focusing on cake cutting instances in the set X. Interestingly, the price of indivisibility for utilitarian welfare ($POIU$) grows significantly faster for envy-free cake cutting as compared to just allocation of heterogeneous continuous resources. In contrast, the $POIE$ (for egalitarian welfare) is comparable with and without the constraint of envy-freeness. However, we have only compared the utilitarian or egalitarian welfare of (specifically) optimal allocations in Allocations($1$) and Allocations($k$). Thus, any algorithms for envy-free cake cutting along the lines of the work of \citet{RePEc:eee:gamebe:v:77:y:2013:i:1:p:284-297}, \citet{Cohler_aaai11} and \citet{Kurokawa:2013:CCB:2891460.2891537} must be careful to keep this distinction in mind. Merely finding an Envy-Free allocation in Allocations($k$) is not guaranteed to improve efficiency, there must be some structural guarantee for efficiency, for instance, that the envy-free allocation in Allocations($k$) that is found by the algorithm is more efficient than the envy-free allocation found by the same algorithm within Allocations($k-1$). The complexity of such algorithms is also an interesting area of future work. Since practical applications such as land division or advertising slot partition may demand small values of $k$ as compared to $n$, parameterized complexity and FPT algorithms are possible avenues of exploration for this problem.

\begin{acks}
	
We acknowledge the support of Conduent Labs India for funding the research that lead to this work.

\end{acks}

\bibliographystyle{ACM-Reference-Format}
\bibliography{paper_current_material}

\end{document}